%
\documentclass[12pt,a4paper]{iopart}
\pdfoutput=1
\usepackage{iopams,setstack,mathrsfs,amsthm,cite,enumerate,float,graphicx,slashed}
\usepackage[utf8]{inputenc}
\usepackage[T1]{fontenc}
\usepackage{lmodern}
%
%
\usepackage[colorlinks,linkcolor=blue,citecolor=blue,urlcolor=blue]{hyperref}
%
%
%
\newcommand{\al}{\alpha}
\newcommand{\be}{\beta}
\newcommand{\de}{\delta}

\newcommand{\vep}{\varepsilon}
\newcommand{\ga}{\gamma}
\newcommand{\ka}{\kappa}
\newcommand{\la}{\lambda}
\newcommand{\om}{\omega}
\newcommand{\si}{\sigma}
\renewcommand{\th}{\theta}
\newcommand{\vp}{\varphi}

\newcommand{\ze}{\zeta}
%
\newcommand{\De}{\Delta}
\newcommand{\Ga}{\Gamma}

%





%

%

\newcommand{\tP}{\widetilde{P}}
\newcommand{\tQ}{\widetilde{Q}}
\newcommand{\tR}{\widetilde{R}}

\newcommand{\tc}{\tilde{c}}

\newcommand{\tH}{\widetilde{H}}

%

\newcommand{\ssC}{\mathsf{C}}

\newcommand{\ssH}{\mathsf{H}}

%

\newcommand{\hP}{\widehat{P}}

%

\newcommand{\NN}{{\mathbb N}}
\newcommand{\RR}{{\mathbb R}}
\newcommand{\ZZ}{{\mathbb Z}}
%

\newcommand{\cL}{{\mathcal L}}
\newcommand{\cM}{{\mathcal M}}
\newcommand{\cN}{{\mathcal N}}
\newcommand{\cP}{{\mathcal P}}

\newcommand{\cS}{{\mathcal S}}

%

%

%

%

\newcommand{\pd}{\partial}

\newcommand{\ket}[1]{|#1\rangle}
\newcommand{\bra}[1]{\langle#1|}

\newcommand{\mss}{\kern 1pt}

\renewcommand{\le}{\leqslant}
\renewcommand{\ge}{\geqslant}
\newcommand{\tends}[1]{\bbuildrel{\hbox to 2em{\rightarrowfill}}_{#1}^{}}

%
%
\newcommand{\operatorname}[1]{\mathop{\rm #1}\nolimits}
\newcommand{\sech}{\operatorname{sech}}
\newcommand{\csch}{\operatorname{csch}}

\newcommand{\arctanh}{\operatorname{arctanh}}
\newcommand{\arcsn}{\operatorname{arcsn}}

\newcommand{\sgn}{\operatorname{sgn}}

\newcommand{\diag}{\operatorname{diag}}

\newcommand{\iu}{\mathrm i}
\newcommand{\diff}{\mathrm{d}}

\newcommand{\sla}{\mathrm{sl}}

\newcommand{\implies}{\Longleftrightarrow}

\newcommand{\sn}{\operatorname{sn}}
\newcommand{\cn}{\operatorname{cn}}

\newcommand{\dn}{\operatorname{dn}}

%
%
\newcommand{\en}{\enspace}

\newcommand{\Int}[1]{\,\mathop{\!#1}\limits^{\lower1ex\hbox{$\scriptstyle\circ$}}{}}

\newcommand{\dc}{c^\dagger}

\newtheorem{theorem}{Theorem}

\theoremstyle{remark}
\newtheorem{remark}{Remark}

%
%
%
%
%
\let\tfrac\case
\let\eqref\eref
\eqnobysec
%
%





%

%
\newcommand{\kF}{k_{\mathrm F}}
\newcommand{\psiL}{{\psi\vphantom{\psi^\dagger}}_{\mathrm L}}
\newcommand{\psiR}{{\psi\vphantom{\psi^\dagger}}_{\mathrm R}}
\newcommand{\Lr}{L_{\mathrm{rel}}}
%
%
\def\clap#1{\hbox to 0pt{\hss#1\hss}}

\begin{document}

\title{Inhomogeneous XX spin chains and quasi-exactly solvable
  models}

\author{Federico Finkel and Artemio González-López
}

\address{Depto.~de Física Teórica, Facultad de Ciencias Físicas, Plaza de las Ciencias 1,\\
  Universidad Complutense de Madrid, 28040 Madrid, SPAIN}

\eads{\mailto{ffinkel@ucm.es}, \mailto{artemio@ucm.es}}

\vspace{10pt}
\begin{indented}
\item[]July 1, 2020
\end{indented}
\begin{abstract}
  We establish a direct connection between inhomogeneous XX spin chains (or free fermion systems
  with nearest-neighbors hopping) and certain QES models on the line giving rise to a family of
  weakly orthogonal polynomials. We classify all such models and their associated XX chains, which
  include two families related to the Lamé (finite gap) quantum potential on the line. For one of
  these chains, we numerically compute the Rényi bipartite entanglement entropy at half filling
  and derive an asymptotic approximation thereof by studying the model's continuum limit, which
  turns out to describe a massless Dirac fermion on a suitably curved background. We show that the
  leading behavior of the entropy is that of a $c=1$ critical system, although there is a
  subleading $\log(\log N)$ correction (where $N$ is the number of sites) unusual in this type of
  models.
\end{abstract}

\noindent {\it Keywords\/}: spin chains, ladders and planes; solvable lattice models; entanglement
in extended quantum systems; conformal field theory.

\submitto{J.~Stat.~Mech.}

\maketitle


\section{Introduction}
\label{sec.intro}

Inhomogeneous XX spin chains ---or, equivalently, systems of free spinless fermions with
nearest-neighbors hopping--- have recently received considerable attention due to their remarkable
entanglement properties. Indeed, the critical phases of these models are effectively described by
$1+1$ dimensional conformal field theories (CFTs), whose entanglement has been extensively studied
using standard field-theoretic techniques (see, e.g., \cite{CC04JSTAT,CC09}). In particular, the
single-block Rényi entanglement entropy of $1+1$ dimensional CFTs features a characteristic
logarithmic growth with the block length $L$, rather than the usual linear growth of thermodynamic
entropy. Thus it is to be expected that the entanglement entropy of critical inhomogeneous XX
chains also scales proportionally to $\log L$ as $L$ goes to infinity; in particular, the model's
central charge can be inferred from the proportionality constant multiplying $\log L$ in the
asymptotic formula for the entropy. The entanglement entropy of the homogeneous XX chain has in
fact been thoroughly studied in the open, closed and (semi-)infinite cases~\cite{JK04,CE10,FC11},
as well as for subsystems consisting of more than one
block~\cite{CH09,CCT09,ATC10,FC10,AEF14,CFGT17}. In all cases the leading behavior of the Rényi
entropy has been found to be logarithmic, with the expected central charge $c=1$ of a free fermion
CFT.

The asymptotic behavior of the entanglement entropy of truly inhomogeneous XX chains is much
harder to establish, since in this case the correlation matrix is neither Toeplitz nor Toeplitz
plus Hankel, so that the standard techniques based on using proved cases of the Fisher--Hartwig
conjecture~\cite{FH68,Ba79,DIK11} to approximate the characteristic polynomial of the latter
matrix for large $L$ cannot be applied. However, in some cases an approximation of the
entanglement entropy can be found by exploiting the model's connection with a suitable CFT. This
idea has been successfully applied to the so-called rainbow chain~\cite{VRL10}, whose hopping
amplitudes decay exponentially outwards from both sides of the chain's center. More precisely, it
was first shown in Ref.~\cite{RDRCS17} that in the continuum limit the rainbow chain's
Hamiltonian tends to that of a massless Dirac fermion in a suitably curved $1+1$ dimensional
background, whose metric's conformal factor is proportional to the square of the chain's hopping
amplitude. Using the results in Ref.~\cite{DSVC17} for the entanglement entropy of the latter
model, the authors in Ref.~\cite{TRS18} derived an asymptotic formula for the entanglement entropy
of the rainbow chain which is in excellent agreement with the numerical data.

Another key feature of inhomogeneous XX chains is their close connection to the classical theory
of orthogonal polynomials, stemming from the fact that the matrix of the system's single-particle
(fermionic) Hamiltonian in the position basis is real, symmetric and tridiagonal, and thus its
entries can be used to define a three-term recursion relation determining a (finite) orthogonal
polynomial system. In this way a one-to-one correspondence between inhomogeneous XX chains and
orthogonal polynomial families is established. In fact, the zeros of the last polynomial in the
family coincide with the chain's single particle energies, and the chain's complete spectrum is
obtained by exciting an arbitrary number of these single particle states. This close connection
between inhomogeneous XX chains and orthogonal polynomials was used, for instance, to characterize
chains of this type with perfect state transfer~\cite{CJ10,Je11,VZ12} or, more recently, to
construct a tridiagonal matrix commuting with the hopping matrix of the entanglement Hamiltonian
of several inhomogeneous XX chains constructed from well-known families of discrete (finite)
polynomial systems~\cite{CNV19}.

The theory of orthogonal polynomials has also close connections to another class of
one-dimensional (one body) quantum systems, namely quasi-exactly solvable (QES) dynamical models
on the line. In general, these models are characterized by the fact that a subset of the spectrum
can be found through algebraic procedures. The prototypical examples of QES models are those whose
Hamiltonian can be expressed as a quadratic polynomial in the generators of the standard spin
$(N-1)/2$ representation of the $\sla(2)$ algebra in terms of first-order differential
operators~\cite{Tu88,Sh89,ST89,Us94}. After an appropriate (pseudo-)gauge transformation, the
(formal) eigenfunctions of these models can be expressed as power series in a suitable variable
$z$, whose coefficients $P_n(E)$ depend polynomially on the energy $E$~\cite{BD96}. In many cases,
these polynomials satisfy a three-term recursion relation, and are therefore orthogonal with
respect to a suitable measure~\cite{FGR96}. Moreover, for certain values of the parameters in the
Hamiltonian there is a positive integer $N$ such that the polynomials $P_n(E)$ with $n\ge N$ are
divisible by $P_N(E)$. Thus the model becomes QES, as the gauged Hamiltonian obviously admits
polynomial eigenfunctions of degree up to $N-1$ in the variable $z$ with energies equal to the $N$
zeros of the critical polynomial $P_N$. Furthermore, in this case the polynomial family
$\{P_n(E)\}_{n=0}^\infty$ is weakly orthogonal, since it can be shown that the polynomials of
degree greater than or equal to $N$ have zero norm.

The aim of this paper is to classify all the inhomogeneous XX spin chains associated with a QES
model on the line, in the sense that they share the same family of orthogonal polynomials. In
other words, we look for chains whose parameters (hopping amplitudes and on-site energies) are
derived from the coefficients of the three-term recursion relation of the weakly orthogonal
polynomial system associated to a QES model. To this end, we first show that there are exactly six
inequivalent types of QES models on the line giving rise to a weakly orthogonal polynomial system,
up to projective transformations. We then prove that any XX chain with the above property is
isomorphic to one of the six chains constructed from the latter canonical forms. One of the
characteristic properties of these chains is that their hopping amplitudes and on-site energies
are algebraic functions of the site index $n$. Remarkably, this is also the case for the models
recently constructed from classical Krawtchouk and dual Hahn polynomials in
Refs.~\cite{CJ10,Je11,CNV19}, although they all differ from the six new chains introduced in this
work. Among these new models there are, in particular, several chains constructed from different
QES realizations of the celebrated Lamé (finite gap) potential~\cite{Ar64}.

As remarked in Ref.~\cite{CNV19}, from the knowledge of the orthogonal polynomial system defined
by an inhomogeneous XX chain it is straightforward to find an explicit expression for the
corresponding free fermion system's correlation matrix, whose eigenvalues yield the model's
entanglement entropy~\cite{VLRK03,Pe03}. This is in fact a very efficient method for computing the
entanglement entropy (in fact, the whole entanglement spectrum), since it is based on
diagonalizing an $L\times L$ matrix instead of the $2^L\times 2^L$ reduced density matrix. We have
applied this idea to compute the Rényi entanglement entropy of one of the new inhomogeneous XX
spin chains constructed from the Lamé potential, whose on-site energies are all zero. Using the
method developed in Ref.~\cite{TRS18}, we have constructed the continuum limit of the latter
chain, which again describes a massless Dirac fermion in a curved $1+1$ dimensional background. In
this way we have obtained an asymptotic formula for the chain's Rényi entanglement entropy when
the number of sites goes to infinity, which is shown to be in excellent agreement with the
numerical results for up to $600$ sites. In particular, this confirms that the model has a
critical phase with $c=1$, as expected.

This paper is organized as follows. In Section~\ref{sec.prelim} we present the models and discuss
their connection with Jacobi (tridiagonal symmetric) matrices. Section~\ref{sec.OPS} includes a
brief summary of several fundamental results from the classical theory of orthogonal polynomials
of interest in the sequel. In particular, we deduce a closed formula for the weights of a finite
(or weakly orthogonal) polynomial system in terms of the zeros of the critical polynomial. In
Section~\ref{sec.conn} we outline the application of these results to the diagonalization of the
hopping matrix of inhomogeneous XX chains (or free fermion systems), establishing a one-to-one
correspondence between orthogonal polynomial systems and inhomogeneous XX spin chains.
Section~\ref{sec.QES} contains a concise review of QES models on the line constructed from the
$\sla(2)$ algebra, with special emphasis on their connection with weakly orthogonal polynomial
systems. In Section~\ref{sec.class} we present our classification of all inequivalent XX spin
chains constructed from the orthogonal polynomial families determined by QES models on the line.
Section~\ref{sec.ent} is devoted to the study of the Rényi entanglement entropy of one of the new
XX spin chains introduced in the previous section, connected to a QES realization of the Lamé
potential. The paper ends with a technical appendix which provides the complete details of the
classification presented in Section~\ref{sec.class}.

\section{Inhomogeneous XX spin chains}\label{sec.prelim}

Assuming conservation of the total number of fermions, the most general free fermion Hamiltonian
with nearest-neighbors hopping can be written as
\begin{equation}\label{Hgen}
  H=\sum_{n=0}^{N-2}J_n\big(\e^{\iu\al_n}{\hat c}^\dagger_n\hat c_{n+1}+\e^{-\iu\al_n}\hat
  c^\dagger_{n+1}\hat c_n\big) +\sum_{n=0}^{N-1}B_n\hat c^\dagger_n\hat c_n\,,
\end{equation}
where $J_n\ge0$, $\al_n,B_n\in\RR$ and the operators $\{\hat c_n,\hat c^\dagger_n\}_{n=0}^{N-1}$
are a family of fermionic operators satisfying the canonical anticommutation relations (CAR)
\[
  \{\hat c_n,\hat c_m\}=\{\hat c^\dagger_n,\hat c^\dagger_m\}=0\,,\qquad \{\hat c_n,\hat
  c^\dagger_m\}=\de_{nm}\,.
\]
As a matter of fact, the latter Hamiltonian can be brought to a simpler canonical form by
introducing the equivalent family of fermionic operators
\[
  c_n=\e^{\iu\be_n}\hat c_n\,,
\]
with suitably chosen phases $\be_n\in\RR$. Indeed, we clearly have
\[
  H=\sum_{n=0}^{N-2}J_n\big(\e^{\iu\al_n}\e^{\iu(\be_n-\be_{n+1})}\dc_nc_{n+1}
  +\e^{-\iu\al_n}\e^{-\iu(\be_n-\be_{n+1})}\dc_{n+1}c_n\big)
  +\sum_{n=0}^{N-1}B_n\dc_nc_n\,,
\]
so that choosing~$\be_{n+1}-\be_n=\al_n$, i.e.,
\[
  \be_n=\sum_{j=0}^{n-1}\al_j\,,
\]
the original Hamiltonian~\eqref{Hgen} reduces to
\begin{equation}
  \label{Hffs}
  H = \sum_{n=0}^{N-2}J_n(\dc_nc_{n+1}+\dc_{n+1}c_{n})+\sum_{n=0}^{N-1}B_n\dc_nc_n\,.
\end{equation}
This model describes a system of $N$ hopping spinless fermions with real hopping amplitudes $J_n$
and chemical potentials (or on-site energies) $B_n$. The latter Hamiltonian obviously commutes
with the total fermion number operator
\[
  \cN=\sum_{n=0}^{N-1}\dc_nc_n\,,
\]
so that the number of fermions is indeed conserved. In what follows we shall always assume that
the hopping amplitudes do not vanish, so that
\begin{equation}\label{Jnpos}
  J_n>0\,,\qquad 0\le n\le N-2\,.
\end{equation}
Note that, by the previous observation, the latter model is trivially equivalent to the analogous
one with nonvanishing (positive or negative) hopping amplitudes $\vep_nJ_n$ with arbitrary signs
$\vep_n\in\{\pm1\}$.

As is well known, under the Jordan--Wigner transformation
\begin{equation}\label{WJ}
  c_n=\prod_{k=0}^{n-1}\si_k^z\cdot\si_n^{+}\,,\qquad 0\le n\le N-1\,,
\end{equation}
where $\si_n^\al$ (with $\al=x,y,z$) denotes the Pauli matrix $\si^\al$ acting on the $n$-th site
and $\si^{\pm}_\al=(\si^x_\al\pm\iu\si^y_\al)/2$, the Hamiltonian~\eqref{Hffs} is transformed into
the spin $1/2$ open XX chain Hamiltonian
\begin{equation}
  \label{Hchain}
  H = \frac12\sum_{n=0}^{N-2}J_n(\si_n^x\si_{n+1}^x+\si_n^y\si_{n+1}^y)
  +\frac12\sum_{n=0}^{N-1}B_n(1-\si_n^z)\,.
\end{equation}
Thus the models \eqref{Hffs} and~\eqref{Hchain} can be regarded as essentially equivalent. In
particular, the fermionic vacuum~$\ket0$ corresponds to the reference
state~$\ket{\uparrow\cdots\uparrow}$ with all spins up, while the general fermionic state
\[
  \dc_{n_0}\cdots\dc_{n_k}\ket0\,,\qquad 0\le n_0<\cdots<n_k\le N-1\,,
\]
is easily seen to correspond to the state
\[
  \si_{n_0}^-\cdots\si_{n_k}^-\ket{\uparrow\cdots\uparrow}
\]
with flipped spins at positions $n_0<\cdots<n_k$. Note also in this respect that the
transformation
\[
  c_n\mapsto \e^{-\iu\be_n}c_n\,,
\]
with $\be_n$ real, corresponds to
\[
  \si_n^\pm\mapsto\e^{\pm\iu\be_n}\si_n^\pm,\qquad \si_n^z\mapsto\si_n^z\,,
\]
which clearly preserves the commutation relations of the Pauli matrices. In what follows we shall
mostly work with the fermionic model~\eqref{Hffs}, our results being easily translated to the XX
spin chain~\eqref{Hchain} by the previous considerations.

Let~$H_1$ denote the restriction of the fermionic Hamiltonian $H$ in Eq.~\eqref{Hffs} to the
subspace of one-particle states, a basis of which consists of the states
\begin{equation}\label{posbasis}
  \ket n:=\dc_n\ket0\,,\qquad 0\le n\le N-1\,,
\end{equation}
with a single fermion at each site $n$. The matrix~$\ssH=(H_{nm})_{n,m=0}^{N-1}$ of $H_1$ in the
position basis~\eqref{posbasis} has matrix elements
\begin{equation}\label{hdef}
  H_{nm}=\bra nH\ket m=J_{n}\de_{m,n+1}+J_{n-1}\de_{m,n-1}+B_n\de_{nm}\,,
\end{equation}
so that $\ssH$ is the $N\times N$ tridiagonal matrix
\begin{equation}\label{Hmat}
  \ssH=
  \left(
  \begin{array}{cccccccc}
    B_0& J_0& 0& 0& \cdots &0 &0 &0\\
    J_0& B_1& J_1& 0& \cdots &0 &0 &0\\
    0& J_1& B_2& J_2& \cdots &0 &0 &0\\
    \cdot& \cdot& \cdot& \cdot& \cdots &\cdot &\cdot &\cdot\\
    0& 0& 0& 0& \cdots &J_{N-3} & B_{N-2} & J_{N-2}\\
    0& 0& 0& 0& \cdots &0 & J_{N-2} & B_{N-1}
  \end{array}\right)\,.
\end{equation}
Note that in terms of this matrix the full Hamiltonian~$H$ can be expressed in matrix notation as
\begin{equation}\label{HC}
  H=\ssC^\dagger \ssH \ssC\,,
\end{equation}
where $\ssC=(c_0\dots c_{N-1})^T$ and $\ssC^\dagger=(\dc_0\dots\dc_{N-1})$. Since $\ssH$ is real
symmetric, it can be diagonalized by means of a real orthogonal transformation~$\Phi$,
i.e.,
\begin{equation}\label{HPhi}
  \Phi^T\ssH\Phi=\diag(E_0,\dots,E_{N-1})\,,
\end{equation}
where $E_0\le\cdots\le E_{N-1}$ are the (real) eigenvalues of $\ssH$.
Let
\[
  \Phi_{nk}=:\phi_n(E_k)\,,
\]
an define a new set of fermionic operators~$\tc_n$ by
\begin{equation}\label{mommodes}
  \tc_k:=\sum_{n=0}^{N-1}\phi_n(E_k)c_n\,,\qquad 0\le k\le N-1,
\end{equation}
which satisfy the CAR on account of the orthogonal character of $\Phi$. Since
$\widetilde\ssC=\Phi^T\ssC$, from Eqs.~\eqref{HC}-\eqref{HPhi} it follows that
\begin{equation}\label{Hdiag}
  H=\widetilde\ssC^\dagger (\Phi^T\ssH\Phi) \widetilde\ssC=\sum_{n=0}^{N-1}E_k\tc^\dagger_n\tc_n\,.
\end{equation}
Thus the \emph{full} Hamiltonian $H$ is diagonal in the basis consisting of the
states
\begin{equation}\label{mombasis}
  \tc^\dagger_{n_0}\cdots\tc_{n_k}^\dagger\ket0\,,\qquad 0\le n_0<\cdots<n_k\le N-1\,,
\end{equation}
whose corresponding energy is given by
\begin{equation}\label{spectrum}
  E(n_0,\dots,n_k)=\sum_{j=0}^{k}E_{n_j}\,.
\end{equation}
In particular, the states $\tc^\dagger_k\ket0$ (with $0\le k\le N-1$) are single-fermion
excitation modes with energy $E_k$. Note, finally, that Eq.~\eqref{HPhi} is equivalent to the
system of $N$ equations
\[
  \sum_{m=0}^{N-1}H_{nm}\phi_m(E_k)=E_k\phi_n(E_k)\,,
\]
or, taking~\eqref{hdef} into account,
\begin{equation}
  \label{phieqs}
  \fl
  E_k\phi_n(E_k)=J_n\phi_{n+1}(E_k)+B_n\phi_n(E_k)+J_{n-1}\phi_{n-1}(E_k)\,,\qquad
  0\le n\le N-1\,,
\end{equation}
with $J_{-1}=J_{N-1}=0$\,. More precisely, the first $N-1$ equations~\eqref{phieqs} determine
$\phi_n(E_k)$ with $n=1,\dots,N-2$ up to the proportionality factor $\phi_0(E_k)\ne0$. The last
equation, which on account of the condition $J_{N-1}=0$ reads
\begin{equation}\label{lasteq}
  (E_k-B_{N-1})\phi_{N-1}(E_k)-J_{N-2}\phi_{N-2}(E_k)=0\,,
\end{equation}
then yields a polynomial equation of degree $N$ in $E_k$ which determines the $N$ single-fermion
excitation energies $E_k$. Finally, the factor $\phi_0(E_k)$ is determined (up to a sign) imposing
the orthonormality condition
\[
  \sum_{n=0}^{N-1}\phi_n^2(E_k)=1\,.
\]
Note that the full orthogonality conditions
\[
  \sum_{n=0}^{N-1}\phi_n(E_j)\phi_n(E_k)=\de_{jk}\,,\qquad 0\le j,k,\le N-1,
\]
or equivalently
\begin{equation}\label{ortphi}
  \sum_{k=0}^{N-1}\phi_n(E_k)\phi_m(E_k)=\de_{nm}\,.\qquad 0\le n,m\le N-1,
\end{equation}
are then automatically satisfied if the eigenvalues $E_k$ of $\ssH$ are simple. In fact, we shall
show below that this is guaranteed by conditions~\eqref{Jnpos}.

\section{Orthogonal polynomials}\label{sec.OPS}

Equations~\eqref{phieqs} determining the matrix elements $\phi_n(E_k)$ (up to normalization) are
reminiscent of the three-term recurrence relation satisfied by a \emph{finite} orthogonal
polynomial system (OPS) $\{P_n(E):n=0,\dots,N\}$. More precisely, taking $P_n$ to be monic for all
$n$ the recurrence relation satisfied by such a system can be written as
\begin{equation}\label{rr}
  P_{n+1}(E)=(E-b_n)P_n-a_nP_{n-1}\,,\qquad 0\le n\le N-1\,,
\end{equation}
where $a_0:=0$ and $P_0(E):=1$. We shall only assume in what follows that
\begin{equation}
  \label{condsab}
  a_n>0 \en\text{for}\en n=1,\dots,N-1\,,\qquad b_n\in\RR\en\text{for}\en n=0,\dots N-1\,.
\end{equation}
In many cases such a finite system is obtained by truncating an infinite orthogonal (or weakly
orthogonal) polynomial family $\{P_n(E):n=0,1,\dots\}$, but this need not be the case.
\begin{theorem}
  The zeros of each polynomial $P_n$ with $1\le n\le N$ are real and simple.
\end{theorem}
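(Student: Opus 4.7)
The plan is to establish by induction on $n$ a slightly stronger statement than the one in the theorem, namely: for every $n$ with $1\le n\le N$, the polynomial $P_n$ has exactly $n$ real simple zeros $x_1<x_2<\cdots<x_n$, and for $n\ge 2$ these strictly interlace with the zeros of $P_{n-1}$. Strengthening the statement is what makes the induction close; the interlacing property is precisely the bookkeeping device that controls the sign of $P_{n-1}$ at the zeros of $P_n$, and hence, through the recurrence \eqref{rr}, the sign of $P_{n+1}$ there.

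The base case $n=1$ is immediate from $P_1(E)=E-b_0$, with the interlacing claim vacuous. The case $n=2$ is also direct: $P_2(b_0)=-a_1<0$ by the positivity hypothesis \eqref{condsab}, and since $P_2$ is monic of degree two with negative value at $b_0$, the intermediate value theorem produces one real simple zero in $(-\infty,b_0)$ and another in $(b_0,+\infty)$, establishing the required interlacing with the unique zero of $P_1$.

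For the inductive step, assume the claim holds up to index $n\ge 2$, write $x_1<\cdots<x_n$ for the zeros of $P_n$, and let $y_1<\cdots<y_{n-1}$ denote those of $P_{n-1}$. Since $P_{n-1}$ is monic of degree $n-1$ and $x_n$ exceeds every $y_i$ by interlacing, $P_{n-1}(x_n)>0$; combined with strict interlacing this gives $\sgn P_{n-1}(x_i)=(-1)^{n-i}$ for $i=1,\ldots,n$. Evaluating \eqref{rr} at $E=x_i$ yields $P_{n+1}(x_i)=-a_nP_{n-1}(x_i)$, and the hypothesis $a_n>0$ then gives $\sgn P_{n+1}(x_i)=(-1)^{n-i+1}$, so the signs alternate at consecutive zeros of $P_n$. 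The intermediate value theorem thus produces one zero of $P_{n+1}$ in each of the $n-1$ intervals $(x_i,x_{i+1})$. Comparing the sign of $P_{n+1}$ at $x_n$ with the leading behavior $P_{n+1}(E)\to +\infty$ as $E\to +\infty$ yields another zero in $(x_n,+\infty)$, and a symmetric comparison at $x_1$ with $P_{n+1}(E)\to (-1)^{n+1}\infty$ as $E\to -\infty$ produces a final zero in $(-\infty,x_1)$. This exhibits $n+1$ distinct real zeros of the degree-$(n+1)$ monic polynomial $P_{n+1}$, so they exhaust its zero set, are simple, and strictly interlace with $x_1<\cdots<x_n$, closing the induction.

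The proof involves no serious obstacle beyond careful sign bookkeeping; the crucial input is the strict positivity $a_n>0$ from \eqref{condsab}, which both guarantees $P_{n+1}(x_i)\neq 0$ (so every IVT interval contains a genuine sign change) and fixes the alternation pattern. As a sanity check, an equivalent viewpoint is that $P_n(E)$ coincides with the characteristic polynomial of the $n\times n$ real symmetric Jacobi matrix with diagonal $(b_0,\ldots,b_{n-1})$ and off-diagonal entries $(\sqrt{a_1},\ldots,\sqrt{a_{n-1}})$ obtained by symmetrizing the recurrence: reality follows from the Hermiticity of this matrix, while simplicity follows from the standard fact that non-vanishing off-diagonal entries force every eigenvector to be determined up to scale by its first component. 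I would nonetheless present the interlacing induction in full, since the interlacing property itself will be the most useful ingredient for subsequent statements about the spectrum of $\ssH$ in \eqref{Hmat}.
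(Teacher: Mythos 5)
Your proof is correct, and it takes a genuinely different route from the paper. The paper's argument is two lines: it extends the finite system to an infinite OPS by choosing arbitrary $a_n>0$, $b_n\in\RR$ for $n\ge N$, invokes Favard's theorem to produce a positive definite moment functional, and then cites Theorem 5.2 of Chihara for the reality and simplicity of the zeros. You instead give a self-contained induction on the three-term recurrence, proving the stronger statement that the zeros of $P_n$ and $P_{n-1}$ strictly interlace; the sign bookkeeping via $P_{n+1}(x_i)=-a_nP_{n-1}(x_i)$ and the intermediate value theorem is carried out correctly, and the base cases and the range of indices for which $a_n>0$ is needed (namely $a_1,\dots,a_{N-1}$) match the hypotheses \eqref{condsab}. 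What your approach buys is twofold: it avoids any appeal to Favard's theorem or to the existence of an orthogonality measure, and it delivers the interlacing property as a byproduct --- a fact the paper actually needs later (to determine $\sgn P_{N-1}(E_k)=\sgn P_N'(E_k)=(-1)^{N-k-1}$ in the derivation of the weight formula \eqref{wkexp}) and there again outsources to Chihara. The paper's route is shorter on the page but rests entirely on cited results; yours is longer but elementary and makes the subsequent sign argument self-contained. Your closing remark identifying $P_n$ with the characteristic polynomial of the truncated Jacobi matrix is also a valid alternative and is essentially the observation the paper itself makes in Remark 2 (citing Exercise 5.7 of Chihara).
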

\begin{proof}
  Extend the finite OPS to an infinite one by arbitrarily defining $a_n>0$ and $b_n\in\RR$ for all
  $n\ge N$. By Favard's theorem, there is a (unique) positive definite moment functional with
  support on the whole real line with respect to which all the polynomials of the extended OPS are
  mutually orthogonal. By Theorem 5.2 of Ref.~\cite{Ch78}, the zeros of all polynomials in the
  family ---and, in particular, of $P_1,\dots,P_N$--- are real and simple.
\end{proof}

In view of the latter theorem, let us denote by $E_0<\cdots<E_{N-1}$ the $N$ (real) roots of the
last polynomial $P_N$ in the family. We shall next construct a positive definite discrete moment
functional $\cL={}$\hbox{$\sum_{k=0}^{N-1}w_k\de(E-E_k)$} supported on these zeros, with respect
to which the polynomials $P_n$ with $0\le n\le N-1$ are mutually orthogonal, i.e,
\begin{equation}\label{Pnmorth}
  \fl
  \bigl\langle P_n,P_m\bigr\rangle:=\cL(P_nP_m)=\sum_{k=0}^{N-1}w_kP_n(E_k)P_m(E_k)=0\,,
  \qquad 0\le n\ne m\le N-1\,.
\end{equation}
In fact, we shall show that this functional is unique if we set (as is customary)
$\langle P_0,P_0\rangle=1$. Indeed, extend again the given finite OPS to an infinite one with
$a_n>0$ and $b_n$ real for all $n$. As mentioned above, there is a positive definite linear
functional $L$ with respect to which the polynomials of the infinite family are mutually
orthogonal, which is unique if we impose the normalization condition $\mu_0:=L(1)=1$. By
Theorem~6.1 of Ref.~\cite{Ch78}, there are positive weights $w_k$ (with $0\le k\le N-1$) such that
the restriction of $L$ to the space of polynomials $p(x)$ of degree not greater than $2N-1$ is
given by
\[
  \cL(p)=\sum_{k=0}^{N-1}w_kp(E_k)\,.
\]
In particular, Eq.~\eqref{Pnmorth} holds for this moment functional.

Once this result is established, we can easily find the square
norm~$\ga_n:=\langle P_n,P_n\rangle$ of each polynomial $P_n$ with $n=0,\dots,N-1$ and the weights
$w_k$. Indeed, taking the scalar product of the recurrence relation~\eqref{rr} with the polynomial
$P_{n-1}$ (with $1\le n\le N-1$) we obtain
\[
  0=\langle EP_{n-1},P_n\rangle-a_n\ga_{n-1}=\ga_n-a_n\ga_{n-1}\,.
\]
Assuming (as shall be done in the sequel) that $\ga_0=\cL(1)=1$ we obtain the formula
\begin{equation}\label{gan}
  \ga_n=\prod_{k=1}^na_n\,,\qquad 0\le n\le N-1\,. 
\end{equation}
We can thus write
\begin{equation}
  \label{orthPs}
  \sum_{k=0}^{N-1}w_kP_n(E_k)P_m(E_k)=\ga_n\de_{nm}\,,
  \qquad 0\le n,m\le N-1\,,
\end{equation}
with $\ga_n$ given by Eq.~\eqref{gan}. Secondly, the simple character of the roots of the
polynomial~$P_N$ entails the following explicit formula for the weights $w_k$:
\begin{equation}
  \label{wkexp}
  w_k=\frac{\prod_{n=1}^{N-1}a_n}{P_{N-1}(E_k)P_N'(E_k)}\,,\qquad 0\le k\le N-1\,.
\end{equation}
Indeed, since
\[
  \pi_k(E):=\frac{P_N(E)}{E-E_k}=\prod_{n=0\atop n\ne k}^{N-1}(E-E_n)
\]
is, like $P_{N-1}$, a monic polynomial of degree $N-1$, we have
\[
  \langle \pi_k,P_{N-1}\rangle=\ga_{N-1}=
  w_kP_{N-1}(E_k)\prod_{n=0\atop n\ne k}^{N-1}(E_k-E_n)=w_kP_{N-1}(E_k)P_N'(E_k).
\]
In particular, Eq.~\eqref{wkexp} shows that the weights $w_k$ are uniquely determined. Note also
that $\sgn P_N'(E_k)=\sgn P_{N-1}(E_k)=(-1)^{N-k-1}$, since $E_k$ lies between the $k$-th and the
$(k+1)$-th zero of $P_{N-1}$ on account of the interlacing theorem~\cite{Ch78}. Thus $w_k>0$ for
all $k=0,\dots,N-1$, as it should.

The previous considerations can be summarized in the following theorem:
\begin{theorem}\label{thm.OPS}
  Let $\{P_n:n=0,\dots,N\}$ be a finite OPS defined by the recursion relation~\eqref{rr}, with
  coefficients $a_n,b_n$ satisfying conditions~\eqref{condsab}. Then the orthogonality
  conditions~\eqref{orthPs} hold, where the positive weights $w_k$ (with $k=0,\dots,N-1$) are
  defined by Eq.~\eqref{wkexp} and $\ga_n>0$ is given by Eq.~\eqref{gan}.
\end{theorem}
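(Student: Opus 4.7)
The plan is to reduce the claim about the finite OPS to standard results for positive definite infinite OPSs, and then extract the explicit formulas directly from the recurrence. The overall strategy essentially formalizes the reasoning sketched in the paragraphs preceding the theorem. As a first step, I would arbitrarily extend the finite sequences $\{a_n\}_{n=1}^{N-1}$ and $\{b_n\}_{n=0}^{N-1}$ to infinite sequences with $a_n>0$ and $b_n\in\RR$ for all $n\ge1$ and $n\ge0$ respectively, and let $\{P_n\}_{n=0}^\infty$ be the monic family defined by the extended recurrence \eref{rr}. By Favard's theorem there exists a unique positive definite moment functional $L$ normalized by $L(1)=1$ with respect to which all $P_n$ are mutually orthogonal. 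Theorem~6.1 of \cite{Ch78} then yields positive weights $w_0,\dots,w_{N-1}$ with $L(p)=\sum_{k=0}^{N-1}w_k p(E_k)$ for every $p$ of degree at most $2N-1$, where $E_0<\cdots<E_{N-1}$ are the (real and simple) zeros of $P_N$ provided by the preceding theorem. Since $\deg(P_nP_m)\le 2N-2$ whenever $0\le n,m\le N-1$, the orthogonality relations \eref{orthPs} follow immediately for some positive norms $\ga_n$.

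Next, I would identify $\ga_n$ and $w_k$ by algebraic manipulation of the recurrence. Taking the discrete inner product of \eref{rr} with $P_{n-1}$ for $1\le n\le N-1$ and using the symmetry $\langle E P_{n-1},P_n\rangle=\langle P_{n-1},E P_n\rangle$ together with the already established orthogonality yields the telescoping identity $\ga_n=a_n\ga_{n-1}$, whence \eref{gan} follows from $\ga_0=L(1)=1$. To obtain \eref{wkexp}, I would introduce the monic polynomial $\pi_k(E):=P_N(E)/(E-E_k)$ of degree $N-1$. Since $\pi_k-P_{N-1}$ has degree at most $N-2$ and is therefore orthogonal to $P_{N-1}$, one has $\langle\pi_k,P_{N-1}\rangle=\ga_{N-1}$; on the other hand, evaluating the quadrature sum directly and using $\pi_k(E_j)=\de_{jk}P_N'(E_k)$ gives $\langle\pi_k,P_{N-1}\rangle = w_k P_{N-1}(E_k) P_N'(E_k)$. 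Solving for $w_k$ produces \eref{wkexp}, and positivity of $w_k$ is then guaranteed by the interlacing theorem, which forces $\sgn P_{N-1}(E_k)=\sgn P_N'(E_k)=(-1)^{N-k-1}$.

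The principal subtlety I would need to address is the apparent ambiguity introduced by the arbitrary extension of the coefficients beyond $n=N-1$: one must check that the nodes $E_k$, the weights $w_k$ and the norms $\ga_n$ are intrinsic to the original finite data. This is automatic, however, because $P_N$ and $P_{N-1}$ depend only on $a_1,\dots,a_{N-1}$ and $b_0,\dots,b_{N-1}$, so that both \eref{wkexp} and \eref{gan} involve only the original coefficients. Beyond this check, every remaining step is a routine calculation, so I do not anticipate any further genuine obstacle.
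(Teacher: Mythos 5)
Your proposal is correct and follows essentially the same route as the paper: extend the finite OPS arbitrarily, invoke Favard's theorem and the Gaussian quadrature result (Theorem~6.1 of \cite{Ch78}) to obtain the discrete orthogonality, derive $\ga_n=a_n\ga_{n-1}$ from the recurrence, and identify the weights via the auxiliary monic polynomial $\pi_k(E)=P_N(E)/(E-E_k)$ together with the interlacing theorem for positivity. Your explicit check that the nodes, weights and norms are independent of the arbitrary extension is a worthwhile clarification that the paper leaves implicit.
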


\section{Connection between finite OPSs and inhomogeneous XX chains}\label{sec.conn}

Comparing the orthogonality relations~\eqref{ortphi} and~\eqref{orthPs} immediately suggests a
connection between a finite OPS satisfying conditions~\eqref{condsab} and an inhomogeneous XX
chain~\eqref{Hchain} or free fermion system~\eqref{Hffs}. Indeed, the orthogonality
conditions~\eqref{ortphi} for the matrix elements $\phi_n(E_k)$ will automatically hold provided
that
\begin{equation}
  \label{phiP}
  \phi_n(E_k)=\sqrt{\frac{w_k}{\ga_n}}\,P_n(E_k)\,,
  \qquad 0\le k,n\le N-1\,.
\end{equation}
Note that the right-hand side of the latter equation is real and well defined, since by
Theorem~\ref{thm.OPS} conditions~\eqref{condsab} guarantee that $w_k,\ga_n>0$ for all
$k,n=0,\dots, N-1$. To find the couplings $J_n$ and the magnetic field strengths $B_n$, we combine
the recursion relation~\eqref{rr} with the Eq.~\eqref{phiP}, thus obtaining
\[
  \fl
  \sqrt{\ga_{n+1}}\,\phi_{n+1}=(E_k-b_n)\sqrt{\ga_n}\,\phi_n(E_k)
  -\sqrt{\ga_{n-1}}\,a_n\phi_{n-1}(E_k)\,,
  \qquad 0\le n\le N-2\,,
\]
or, taking into account Eq.~\eqref{gan} for $\ga_n$,
\[
  \sqrt{a_{n+1}}\,\phi_{n+1}(E_k)=(E_k-b_n)\phi_n(E_k)
  -\sqrt{a_{n}}\,\phi_{n-1}(E_k)\,,\qquad 0\le n\le N-2\,.
\]
Comparing with Eqs.~\eqref{phieqs} we immediately arrive at the relations
\[
  J_n=\sqrt{a_{n+1}}\,,\quad B_n=b_n\,,\qquad  0\le n\le N-2\,.
\]
We still have to enforce the recursion relation~\eqref{rr} for $n=N-1$, which taking into account
the identity $P_N(E_k)=0$ and the previous relations yields
\[
  \fl
  0=(E_k-b_{N-1})\phi_{N-1}(E_k)-\sqrt{a_{N-1}}\phi_{N-2}(E_k)\\
  =(E_k-b_{N-1})\phi_{N-1}(E_k)-J_{N-2}\phi_{N-2}(E_k).
\]
Comparing with Eq.~\eqref{lasteq} we thus conclude that $B_{N-1}=b_{N-1}$. In summary, we have
established the following result:

\begin{theorem}\label{thm.main}
  Let $\{P_n:n=0,\dots,N\}$ be a finite OPS defined by the recursion relation~\eqref{rr}, with
  coefficients $a_n>0$ and $b_n\in\RR$. Then the inhomogeneous open XX chain with
  Hamiltonian~\eqref{Hchain} ---or, equivalently, the free fermion system with
  Hamiltonian~\eqref{Hffs}--- and coefficients
  \begin{equation}\label{JBn}
    J_n=\sqrt{a_{n+1}}\,,\qquad B_n=b_n
  \end{equation}
  is diagonal in the basis~\eqref{mommodes}-\eqref{mombasis}, where the single-fermion excitation
  energies $E_0<\cdots<E_{N-1}$ are the zeros of the polynomial $P_N$ and the coefficients
  $\phi_n(E_k)$ in Eq.~\eqref{mommodes} are given by Eq.~\eqref{phiP}.
\end{theorem}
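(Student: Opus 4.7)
My plan is to verify directly that the normalized polynomial values $\phi_n(E_k):=\sqrt{w_k/\gamma_n}\,P_n(E_k)$ solve the single-particle eigenvalue system~\eqref{phieqs}--\eqref{lasteq} for the tridiagonal matrix $\ssH$ built from the coefficients in~\eqref{JBn}, and that the matrix $\Phi=(\phi_n(E_k))$ is orthogonal. Once this is shown, the diagonalization~\eqref{HPhi} together with the CAR-preserving change of modes~\eqref{mommodes} and the computation~\eqref{Hdiag} from Section~\ref{sec.prelim} immediately give the stated diagonal form.

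Concretely, the first step is to take the three-term recursion~\eqref{rr}, evaluate it at $E=E_k$, and divide by $\sqrt{\gamma_n}$. Using~\eqref{gan}, the factors $\sqrt{\gamma_{n+1}/\gamma_n}=\sqrt{a_{n+1}}$ and $\sqrt{\gamma_{n-1}/\gamma_n}=1/\sqrt{a_n}$ appear, and multiplying by $\sqrt{w_k}$ throughout yields
\begin{equation*}
\sqrt{a_{n+1}}\,\phi_{n+1}(E_k)=(E_k-b_n)\phi_n(E_k)-\sqrt{a_n}\,\phi_{n-1}(E_k),
\qquad 0\le n\le N-1,
\end{equation*}
with the conventions $P_{-1}\equiv 0$ and $P_N(E_k)=0$. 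Matching this to~\eqref{phieqs} and~\eqref{lasteq} with $J_{-1}=J_{N-1}=0$ forces $J_n=\sqrt{a_{n+1}}$ and $B_n=b_n$, which is precisely~\eqref{JBn}; the terminal equation~\eqref{lasteq} is satisfied because $P_N(E_k)=0$ kills the would-be $\sqrt{a_N}\,\phi_N$ term. Hence each column of $\Phi$ is indeed an eigenvector of $\ssH$ with eigenvalue $E_k$.

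Next I would establish that $\Phi$ is orthogonal. Here I invoke Theorem~\ref{thm.OPS}: the discrete orthogonality relation~\eqref{orthPs} with weights~\eqref{wkexp} and norms~\eqref{gan} translates, via the definition of $\phi_n(E_k)$, into
\begin{equation*}
\sum_{k=0}^{N-1}\phi_n(E_k)\phi_m(E_k)
=\frac{1}{\sqrt{\gamma_n\gamma_m}}\sum_{k=0}^{N-1}w_kP_n(E_k)P_m(E_k)=\delta_{nm},
\end{equation*}
which is~\eqref{ortphi}. Since the zeros $E_0<\dots<E_{N-1}$ of $P_N$ are real and simple (by the theorem at the start of Section~\ref{sec.OPS}), these are $N$ distinct eigenvalues of the $N\times N$ matrix $\ssH$, so~\eqref{HPhi} holds with a real orthogonal $\Phi$.

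The final step is purely bookkeeping: substituting~\eqref{HPhi} into $H=\ssC^\dagger\ssH\ssC$ and introducing the rotated fermionic modes~\eqref{mommodes}, which satisfy the CAR precisely because $\Phi$ is orthogonal, yields~\eqref{Hdiag}, and therefore the many-body spectrum and eigenbasis~\eqref{mombasis}--\eqref{spectrum}. I do not anticipate any real obstacle here; the only point requiring care is the consistent handling of the boundary indices ($n=0$ with $a_0=0$, and $n=N-1$ with $P_N(E_k)=0$), which is what correctly closes the recursion at both ends and fixes $B_{N-1}=b_{N-1}$ without needing an extra hypothesis.
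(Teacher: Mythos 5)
Your proposal is correct and follows essentially the same route as the paper: rescale the three-term recursion by $\sqrt{w_k/\ga_n}$ using~\eqref{gan} to identify the eigenvalue equations~\eqref{phieqs}--\eqref{lasteq} with $J_n=\sqrt{a_{n+1}}$, $B_n=b_n$, invoke the discrete orthogonality~\eqref{orthPs} of Theorem~\ref{thm.OPS} to get the orthogonality of $\Phi$, and conclude via the mode rotation of Section~\ref{sec.prelim}. Your handling of the boundary cases ($a_0=0$ at $n=0$ and $P_N(E_k)=0$ at $n=N-1$, fixing $B_{N-1}=b_{N-1}$) matches the paper's treatment.
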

\begin{remark}
  Of course, Theorem~\ref{thm.main} can be easily extended to the (apparently more general)
  Hamiltonian~\eqref{Hgen} and, in particular, to the model~\eqref{Hffs} with nonzero hopping
  amplitudes $\vep_nJ_n$ with arbitrary signs $\vep_n$. More precisely, the single-fermion
  excitation energies of the latter Hamiltonian are still the roots of the last polynomial $P_N$
  from the finite OPS satisfying \eqref{rr}-\eqref{condsab}, and~\eqref{Hgen} can be brought into
  the diagonal form~\eqref{Hdiag} introducing the operators~\eqref{mommodes} with
  \[
    \phi_n(E_k)=\e^{-\iu\sum_{l=0}^{n-1}\al_l}\sqrt{\frac{w_k}{\ga_n}}\,P_n(E_k)\,, \qquad 0\le
    k,n\le N-1\,.
  \]
  Note that the matrix $\Phi$ with matrix elements $\phi_{n}(E_k)$ (with $k,n=0,\dots,N-1$) is
  still unitary, as a result of the orthogonality condition~\eqref{orthPs} satisfied by the
  polynomials $P_n$. Thus the operators $\{\tc_n,\tc_n^\dagger:n=0,\dots,N-1\}$ satisfy the CAR,
  with $\tc_n^\dagger$ creating the $n$-th single-fermion energy eigenstate. This implies that the
  spectrum of the model~\eqref{Hgen} depends only on $J_n$ and $B_n$, a fact which obviously also
  follows from the observation at the beginning of Section~\ref{sec.prelim}. In particular, the
  $2^{N-1}$ XX chains with $J_n>0$ and hopping amplitudes $\vep_nJ_n$ with arbitrary signs
  $\vep_n$ are \emph{isospectral}. In other words, the spectrum of the chain~\eqref{Hchain} with
  $J_n$ real and nonvanishing depends only on $|J_n|$ and $B_n$ only. It immediately follows from
  this observation that the spectrum of the XX chain~\eqref{Hchain} with $B_n=0$ for
  $n=0,\dots,N-1$ is symmetric about $0$.
\end{remark}
\begin{remark}
  The fact that the spectrum of the Hermitian tridiagonal matrix
  \[
    \left(
    \begin{array}{cccccccc}
      B_0& J_0 & 0& \cdots &0 &0 &0& 0\\
      J_0^*& B_1& J_1& 0& \cdots &0 &0 &0\\
      0& J_1^*& B_2& J_2& \cdots &0 &0 &0\\
      \cdot& \cdot& \cdot& \cdot& \cdots &\cdot &\cdot &\cdot\\
      0& 0& 0& 0& \cdots &J_{N-3}^* & B_{N-2} & J_{N-2}\\
      0& 0& 0& 0& \cdots &0 & J_{N-2}^* & B_{N-1}
    \end{array}
  \right)
\]
depends only on $|J_n|$ and $B_n$, and that its eigenvalues are the roots of the polynomial $P_N$
(provided that $|J_n|=\sqrt{a_{n+1}}$ and $B_n=b_n$), is well known in the classical theory of
orthogonal polynomials (cf.~Exercise 5.7 in Ref.~\cite{Ch78}).
\end{remark}

In practice, Theorem~\ref{thm.main} is usually applied in two somewhat different situations:

\begin{enumerate}[I.]
\item $\{P_n:n=0,1,\dots,\}$ is an infinite polynomial family determined by the recursion
  relation~\eqref{rr} with coefficients $a_n>0$, $b_n\in\RR$ \emph{independent of} $N$.
\item $\{P_n:n=0,1,\dots,N\}$ is a finite OPS defined by the recursion relation~\eqref{rr}, with
  coefficients $a_n(N)>0$, $b_n(N)\in\RR$ \emph{depending on a positive integer parameter} $N$.
\end{enumerate}
In the first scenario, for each $N\in\NN$ we simply truncate the infinite family to obtain a
finite OPS $\{P_n:n=0,\dots,N\}$ yielding a class of inhomogeneous chains with $N$ sites and
Hamiltonian~\eqref{Hchain} ---equivalently, a class of inhomogeneous free $N$-fermion systems with
Hamiltonian~\eqref{Hffs}--- with coefficients $J_n$, $B_n$ given by~\eqref{JBn} and thus
\emph{independent of $N$}. In particular, the first $N-1$ couplings $J_n$ and $N$ on-site energies
$B_n$ of the chain with $N+1$ sites coincide with those of the corresponding chain with $N$ sites.
Typical instances of this situation are the families of classical orthogonal polynomials. Note
that, by Favard's theorem, in this case there is a unique positive definite moment functional $L$
for the whole infinite OPS. This moment functional is usually (but not always) defined by a
continuous Stieltjes measure $\mu(x)\diff x$, i.e.,
\[
  L(f)=\int_If(x)\mu(x)\diff x\,,
\]
with $I\subset\RR$ a finite or infinite interval (this is the case, for instance, with the
families of classical polynomials). By Favard's theorem and Theorem 6.1 in Ref.~\cite{Ch78}, the
($N$-dependent) moment functional $\cL$ in Eq.~\eqref{Pnmorth} is the restriction of $\mu_0^{-1}L$
to polynomials of degree up to $2N-1$. For other infinite polynomial families (for instance,
Charlier and Meixner polynomials), the moment functional is discrete but infinite, i.e., of the
form
\[
  L(f)=\sum_{k=0}^\infty \om_kf(x_k)\,.
\]
Again, for each $N\in\NN$ the restriction of $\mu_0^{-1}L$ to polynomials of degree up to $2N-1$
coincides with the functional $\cL$ in Theorem~\ref{thm.main}. Moreover, in this case a
straightforward generalization of the method summarized in Theorem~\ref{thm.main} can in principle
be applied to the semiinfinite chain
\[
  H=\sum_{n=0}^\infty J_n(\si_n^x\si_{n+1}^x+\si_n^y\si_{n+1}^y)
  +\frac12\sum_{n=0}^{\infty}B_n(1-\si_n^z)\,,
\]
or equivalently to the semiinfinite free fermion system
\[
  H = \sum_{n=0}^{\infty}J_n(\dc_nc_{n+1}+\dc_{n+1}c_{n})+\sum_{n=0}^{\infty}B_n\dc_nc_n\,,
\]
with coefficients satisfying~\eqref{JBn}. In particular, the relevant moment functional for
this infinite chain or free fermion system is $L$.

Similarly, a typical example of the second scenario described above is the case of an infinite
family of \emph{weakly orthogonal} polynomials~$\{P_n:n=0,1\dots,\}$, determined by a recursion
relation of the form~\eqref{rr}, with coefficients $a_n(N)$, $b_n(N)\in\RR$ depending on a
positive integer parameter $N$ satisfying
\[
  a_n(N)>0,\quad 0\le n\le N-1\,,\qquad a_N(N)=0\,.
\]
For each $N\in\NN$, we can apply Theorem~\ref{thm.main} to the truncated family
$\{P_n:n=0,\dots,N\}$, with weights $w_k$ in Eq.~\eqref{wkexp}, single-fermion excitation energies
$E_k$ and coefficients $J_n$, $B_n$ in the corresponding Hamiltonians~\eqref{Hffs}-\eqref{Hchain}
depending on $N$. An example of this are the well-known families of Krawtchouk and Hahn
polynomials, as well as the families of orthogonal polynomials associated to certain quasi-exactly
solvable one-dimensional quantum models studied in the following sections. Note that, again by
Favard's theorem, in the case of an infinite family of weakly orthogonal polynomials the
functional~$\cL$ in Eq.~\eqref{Pnmorth} is actually the moment functional for the whole infinite
family with parameter $N$, although this result is of little practical use since the polynomials
with index greater than $N$ are multiples of $P_N$.

\section{Quasi-exactly solvable models on the line}\label{sec.QES}

In this section we shall summarize the main facts about quasi-exactly solvable models needed in
the sequel. Roughly speaking, a quasi-exactly solvable model is a quantum system for which part of
the spectrum (although in general not all of it) can be computed algebraically, i.e., by solving a
polynomial equation (or equations) of finite degree. Although this can be achieved in many ways,
the class of QES models of interest in this paper is the family of one-dimensional single-particle
quantum Hamiltonians classified in Refs.~\cite{Tu88,GKO93}. The partial solvability of these
models hinges on their close connection with the $\mathrm{sl}(2)$ algebra spanned by the
first-order differential operators
\begin{equation}\label{sl2}
  L_-=\pd_z,\qquad L_0=z\pd_z-\frac{N-1}2\,,\qquad L_+=z^2\pd_z-(N-1)z\,,
\end{equation}
where $N$ is a positive integer equal to the number of algebraic levels (see, e.g.,
Refs.~\cite{Sh89,GKO94} for reviews on the subject). More precisely, the key assumption is that
the one-dimensional Hamiltonian (in appropriate units)
\begin{equation}\label{QESmodel}
  H=-\pd_x^2+V(x)
\end{equation}
can be mapped by a change of variables $z=\ze(x)$ and a pseudo-gauge
transformation~$H\mapsto H_g=\mu(z)^{-1}H\mu(z)$ to a second-order differential operator $H_g$ in
the variable $z$ ---the so called \emph{gauge Hamiltonian}--- of the form
\begin{equation}\label{Hg}
  H_g=-\sum_{a,b}h_{ab}L_aL_b-\sum_ah_aL_a-h_*\,,
\end{equation}
with $h_{ab}=h_{ba}$, $h_a$ ($a,b=-,0,+$) and $h_*$ real parameters. The gauge factor~$\mu(z)$,
the change of variables~$z=\ze(x)$ and the potential~$V(x)$ can be easily expressed in terms of
these parameters, namely~\cite{FGR96}
\begin{equation}\label{xmu}
  x=\pm\int\frac{\diff z}{\sqrt{P(z)}}\,,\qquad
  \mu(z)=P(z)^{-\frac{N-1}4}\exp\bigg(\int\frac{Q(z)}{2P(z)}\,\diff z\bigg)\,
\end{equation}
and
\begin{equation}\label{V}
  V=(N^2-1)\bigg(\frac{P'^2}{16P}-\frac{P''}{12}\bigg)
  +N\bigg(\frac{Q'}2-\frac{QP'}{4P}\bigg)+\frac{Q^2}{4P}-R\,,
\end{equation}
where $P(Z)$, $Q(z)$, $R$ are the polynomials defined by\footnote{We have set without loss of
  generality $c_{-+}=c_{+-}=0$, on account of the Casimir identity
  $L_0^2-\frac12(L_+L_-+L_-L_+)=\frac14(N^2-1)$.}
\begin{eqnarray}
  \fl
  P(z)=h_{++}z^4+2h_{0+}z^3+h_{00}z^2+2h_{0-}z+h_{--}\,,\qquad
  Q(z)=h_+z^2+h_0z+h_-\,,\nonumber\\
  \fl
  R=h_*+\frac{h_{00}}{12}\,(N^2-1)\,,
    \label{PQR}
\end{eqnarray}
the prime denoting derivative with respect to $z$. In terms of the polynomials $(P,Q,R)$ defined
in the latter equation, the gauge operator~$H_g$ is given by
\begin{eqnarray}
  \fl
  H_g=-P(z)\pd_z^2-\bigg[Q(z)-\frac{N-2}2\,&P'(z)\bigg]\pd_z\nonumber\\
  &-\bigg[R-\frac{N-1}2\,Q'(z)
    +\frac{(N-1)(N-2)}{12}\,P''(z)\bigg].
    \label{HgPQR}
\end{eqnarray}

Since the generators $L_a$ obviously preserve the space $\cP_{N-1}$ of polynomials in~$z$ of
degree less than or equal to $N-1$, the same will be true for the gauge Hamiltonian $H_g$. It
follows from this simple observation that $N$ eigenvalues~$E_k$ (with $0\le k\le N-1$) of $H_g$
and their corresponding eigenfunctions $\vp_k(z)$ can be algebraically computed by diagonalizing
the restriction of~$H_g$ to the finite-dimensional space $\cP_{N-1}$. In turn, this means that the
physical Hamiltonian $H=\mu(z)H_g\mu(z)^{-1}$ possesses $N$ eigenfunctions $\psi_k=\mu\vp_k$ whose
energies $E_k$ can be algebraically determined\footnote{We are, of course, sidestepping several
  technical issues duly addressed in the previously cited references, like for instance the fact
  that $H_g$ is diagonalizable in $\cP_{N-1}$ and that its eigenvalues are real.}. To see this in
more detail, let us look for analytic solutions
\[
  \vp(z)=\sum_{n\ge0}\hP_n(E)\,\frac{z^n}{n!}\,,
\]
of the eigenvalue equation $H_g\vp=E\vp$. It is easy to check from the explicit expression of the
operators $L_a$ that if we set $\hP_0(E)=1$ then $\hP_n(E)$ is a polynomial in $E$. Moreover, if
$E=E_k$ is one of the algebraic eigenvalues of $H_g$ the latter expression for $\vp$ must reduce
to a polynomial of degree less than or equal to $N-1$ in $z$. We must thus have $\hP_n(E_k)=0$ for
$n\ge N$; in other words, the algebraic eigenvalues are the roots of the critical polynomial
$\hP_N$, and $\hP_n=Q_n\hP_N$ (with $Q_n(E)$ a polynomial) for $n\ge N$. The polynomials $\hP_n$
satisfy in general a five-term recursion relation, which reduces to a three-term one provided that
$h_{++}=h_{--}=0$~\cite{FGR96}. More precisely, if the latter conditions hold we have
\begin{eqnarray}\label{rrhP}
  \fl
  \big[(N-2n-2)h_{0-}-h_-\big]\hP_{n+1}=\bigg[E+h_*+&h_0\Big(n-\tfrac{N-1}2\Big)\nonumber
                                                 +h_{00}\Big(n-\tfrac{N-1}2\Big)^2\bigg]\hP_n\\
  \fl
  &-n(N-n)\big[(2n-N)h_{0+}+h_+\big]\hP_{n-1},
\end{eqnarray}
with $\hP_{-1}=0$ and $\hP_0=1$. Thus a QES model with $h_{++}=h_{--}=0$ defines a polynomial
family $\{\hP_n(E):n=0,1,\dots\}$, with $\deg\hP_n=n$, provided that
\begin{equation}
  \label{hPncond}
  A_{n+1}:=(N-2n-2)h_{0-}-h_-\ne0\,,\qquad n=0,1,\dots\,.
\end{equation}
If this is the case the polynomial system defined by~\eqref{rrhP} is always \emph{weakly
  orthogonal}, since the coefficient of $\hP_{n-1}$ in Eq.~\eqref{rrhP} vanishes for $n=N$. This
is consistent with the previous discussion, since $\hP_N(E)=0$ must imply that $\hP_n(E)=0$ for
$n>N$.

In order to relate the results on QES models we have just outlined to those in the previous
sections, assuming that condition~\eqref{hPncond} holds we introduce the monic polynomials
\begin{equation}
  \label{PnQES}
  P_n(E)=\prod_{j=1}^nA_j \cdot\hP_n(E)=\prod_{j=1}^n[(N-2j)h_{0-}-h_-\big]\cdot \hP_n(E)\,,
\end{equation}
which on account of~\eqref{rrhP} satisfy the recursion relation
\begin{eqnarray}
  \label{rrQES}
  \fl
  P_{n+1}=\bigg[E+h_*+h_0&\Big(n-\tfrac{N-1}2\Big)
                  +h_{00}\Big(n-\tfrac{N-1}2\Big)^2\bigg]P_n\nonumber\\
  \fl
  &-n(N-n)\big[(2n-N)h_{0+}+h_+\big]\big[(N-2n)h_{0-}-h_-\big]P_{n-1}
\end{eqnarray}
with $P_{-1}=0$, $P_0=1$. We thus see that a QES model with
\begin{equation}
  \label{condP}
  h_{++}=h_{--}=0
\end{equation}
for which condition~\eqref{hPncond} holds is related in the manner specified by
Theorem~\ref{thm.main} to an inhomogeneous XX chain~\eqref{Hchain} ---or a free fermion
system~\eqref{Hffs}--- with parameters
\begin{equation}
  \label{QESchain}
  \fl
  \eqalign{
  J_n&=\sqrt{(n+1)(N-n-1)\big[(2n+2-N)h_{0+}+h_+\big]\big[(N-2n-2)h_{0-}-h_-\big]}\,,\cr
  B_n&=-h_*-h_0\Big(n-\tfrac{N-1}2\Big)
  -h_{00}\Big(n-\tfrac{N-1}2\Big)^2\,,
  }
\end{equation}
provided that
\begin{equation}
  \label{condan}
  \fl
  \big[(2n+2-N)h_{0+}+h_+\big]\big[(2n+2-N)h_{0-}+h_-\big]<0\,,\qquad 0\le n\le N-2\,.
\end{equation}
Note, in particular, that this condition implies the validity of Eq.~\eqref{hPncond} for
$0\le n\le N-2$. Thus, if Eqs.~\eqref{condan} and~\eqref{hPncond} hold the zeros $E_k$ (with
$0\le k\le N-1$) of the critical polynomial $P_N$ are both the algebraic eigenvalues of the
one-dimensional QES model with potential given by Eq.~\eqref{V} and the single-fermion excitation
energies of the free fermion system with parameters~\eqref{QESchain} ---or, equivalently, the
one-magnon energies of the corresponding chain~\eqref{Hchain}. Moreover, the polynomials $P_n(E)$
evaluated at the energies $E_k$ determine both the single-fermion excitation
operators~$\tc^\dagger_n$ through Eqs.~\eqref{mommodes}-\eqref{phiP} and the algebraic
eigenfunctions~$\psi_k(x)$ of the QES model~\eqref{QESmodel}-\eqref{V} through the formula
\begin{equation}\label{algeig}
  \psi_k(x)=\mu(z)\sum_{n=0}^{N-1}\prod_{j=1}^n[(N-2j)h_{0-}-h_-\big]^{-1}\cdot P_n(E)\,\frac{z^n}{n!}\,,
\end{equation}
with $z$ and $\mu$ given by Eq.~\eqref{xmu}. Note that, as remarked above for the family
$\{\hP_n(E):n=0,1,\dots\}$, the polynomials~$P_n(E)$ with $n=0,1,\dots$ determined by the
recursion relation~\eqref{rrQES} are always weakly orthogonal, since the coefficient $a_n$ in the
latter relation necessarily vanishes for $n=N$.

\section{Classification}\label{sec.class}

\subsection{Preliminaries}

In this section we shall perform an exhaustive classification of all inequivalent XX inhomogeneous
chains that can be constructed from a QES model in the manner explained above. The key idea in
this respect is to take advantage of the action on the gauge Hamiltonian $H_g$ of the group of
projective (Möbius) transformations, mapping a polynomial $p(z)\in\cP_{N-1}$ to the polynomial
\[
  \tilde p(w):=(\ga w+\de)^{N-1}p\biggl(\frac{\al w+\be}{\ga w+\de}\biggr)\in\cP_{N-1}\,,
  \qquad \De:=\al\de-\be\ga\ne0\,.
\]
Indeed~\cite{GKO93}, under the latter transformation $H_g$ is mapped to the operator
\begin{equation}\label{tHg}
  \widetilde H_g=(\ga w+\de)^{N-1}H_g\Big|_{z=\frac{\al w+\be}{\ga w+\de}}(\ga w+\de)^{-(N-1)}\,,
\end{equation}
which is still of the form~\eqref{HgPQR} with $z$ replaced by $w$ and $(P,Q,R)$ replaced by
\begin{equation}\label{tPQR}
  \fl
  \widetilde P(w)=\frac{(\ga w+\de)^{4}}{\De^2}\,P\bigl(\tfrac{\al w+\be}{\ga
    w+\de}\bigr)\,,\qquad \widetilde Q(w)=\frac{(\ga w+\de)^{2}}\De\,Q\bigl(\tfrac{\al w+\be}{\ga
    w+\de}\bigr) \,,\qquad \tR=R\,.
\end{equation}
Moreover, it can be readily verified that the quantities
\[
  \frac{3P'^2}{P}-4P''\,,\quad 2Q'-\frac{QP'}P\,,\quad \frac{Q^2}P
\]
are invariant under the mapping~\eqref{tPQR}, so that the QES potential $V(x)$ given by
Eq.~\eqref{V} is also invariant, i.e., it can be computed either from the triple $(P,Q,R)$ or from
$(\tP,\tQ,\tR)$ obtaining exactly the same result. Since
$\vp_k(z):=\sum_{n=0}^{N-1}\hP_n(E_k)z^n/n!$ is a polynomial eigenfunction of $H_g$ with
eigenvalue $E_k$, Eq.~\eqref{tHg} implies that
\[
  \widetilde\vp_k(w):=(\ga w+\de)^{N-1}\vp_k(z)=\sum_{n=0}^{N-1}\frac{\hP_n(E_k)}{n!}\, (\al
  w+\be)^n(\ga w+\de)^{N-1-n}
\]
is a polynomial eigenfunction of $\tH_g$ with the same eigenvalue. Hence
$\widetilde\psi_k(x)=\widetilde\mu\widetilde\vp_k(w)$ is the corresponding algebraic eigenfunction
of $H$, also with energy $E_k$. In fact, from Eq.~\eqref{xmu} and the invariance of
$(Q/P\,)\diff z$ it follows that (up to a trivial constant factor)
\[
  \widetilde\mu=(\ga w+\de)^{-(N-1)}\mu\,,
\]
and thus
\[
  \widetilde\psi_k=\widetilde\mu\widetilde\vp_k=(\ga w+\de)^{N-1}\widetilde\mu\vp_k=
  \mu\vp_k=\psi_k\,.
\]
Hence the algebraic eigenfunctions of $H$ computed from $\tH_g$ coincide with those obtained from
$H_g$. If we now assume that $\tP(z)$ satisfies the analogue of conditions~\eqref{condP} then we
can write
\[
  \widetilde\vp_k(w)=\sum_{n=0}^{N-1}\widehat{\widetilde P_n}(E_k)\,\frac{w^n}{n!}\,,
\]
where $\widehat{\widetilde P}_n(E)$ is a polynomial in $E$ of degree $n$ satisfying a three-term
recursion relation akin to~\eqref{rrhP}. The corresponding monic polynomial
family~$\{\tP_n(E):0\le n\le N\}$ in general differs from~$\{P_n(E):0\le n\le N\}$ and, as a
consequence, the XX spin chains determined by these families will generally have different
coefficients. Crucially, though, since the algebraic eigenfunctions constructed from both families
have the same energies $E_k$, the polynomials $P_N$ and $\tP_N$ must coincide. Moreover, since
these energies determine the full spectrum of the associated XX chain through
Eq.~\eqref{spectrum}, the chains defined by the polynomial families $\{P_n:0\le n\le N\}$ and
$\{\tP_n:0\le n\le N\}$ must be isospectral. Even more, since the matrices $\sf H$ and
$\widetilde{\sf H}$ of the single-particle Hamiltonians of both chains are related by
$\widetilde{\sf H}=O^T\sf H O$, where $O$ is a real orthogonal matrix, it follows from
Eq.~\eqref{HC} that the Hamiltonian $\widetilde H$ is mapped into $H$ by the unitary
transformation
\begin{equation}\label{tcOc}
  \widetilde c_i=\sum_{j=0}^{N-1}O_{ij}c_j
\end{equation}
between their respective sets of fermionic operators. We can thus equally well use the sets
$(P,Q,R)$ or $(\tP,\tQ,\tR)$ to construct the QES model~\eqref{V} and its corresponding XX chain,
up to the isomorphism determined by Eq.~\eqref{tcOc}. In view of this residual symmetry in the
description of a QES model and its associated chain or free fermion system, we can apply to a
quartic polynomial $P(z)$ with real coefficients satisfying conditions~\eqref{condP}, i.e.,
vanishing at zero and infinity on the extended real line, a \emph{real} projective
transformation~\eqref{tPQR} taking it to a simpler (canonical) form still satisfying the latter
conditions. It is possible in this way to reduce any quartic polynomial $P$ satisfying
conditions~\eqref{condP} to the six inequivalent types of canonical forms listed in
Table~\ref{tab.canP} (see the appendix for a detailed proof).
\begin{table}
  \centering
\begin{tabular}{ll}
\noalign{\smallskip}\hline\noalign{\smallskip}
  1. & $\nu z(1+z)$\\
  2. & $\nu z(1-z)$\\
  3. & $\nu z^2$\\
  4. & $z$\\
  5. & $\nu z(1+z)(a+z)\,,\quad\text{with}\en 0<a<1$\\
  6. & $\nu z(z^2+2a z+1)\,,\quad\text{with}\en -1<a<1$\\
\noalign{\smallskip}\hline
\end{tabular}
\caption{Inequivalent canonical forms of a quartic polynomial vanishing at the origin and infinity
  under projective transformations~\eqref{tPQR} (in all cases, $\nu$ is a positive constant).}
\label{tab.canP} 
\end{table}

\subsection{Canonical forms of XX chains constructed from QES potentials}

We shall next construct the QES potentials $V(x)$ and the XX chains determined by the six
canonical forms listed in Table~\ref{tab.canP}. Note, in this respect, that the parameter $\nu>0$
appearing in most of these canonical forms can be chosen at will by rescaling the $x$ coordinate.
By Eq.~\eqref{QESchain}, multiplying $P$ and $Q$ by a constant factor~$\la>0$ merely rescales the
parameters $J_n$ and $B_n$ in the associated XX chain or free fermion system by the same factor.
Hence, without loss of generality, we shall fix the parameter $\nu$ appropriately in each case to
simplify the expression for the potential $V(x)$.

\bigskip
\noindent 1.\en $P(z)=4z(1+z)$
\medskip

\noindent
The change of variables relating the variable $z$ to the physical coordinate $x$ is given in this
case by
\[
  z=\sinh^2x\,,
\]
up to an irrelevant translation $x\mapsto x-x_0$. Setting
\[
  Q(z)=-8\al z^2+4(-2\al+\be+\ga+N-1)z+2(2\ga+N-1)\,,
\]
where $\al,\be,\ga$ are real parameters, we obtain the following formula for the pseudo-gauge
factor $\mu$ in Eq.~\eqref{xmu}:
\[
  \mu=\e^{-\frac\al2\cosh 2x}(\cosh x)^\be(\sinh x)^\ga\,,
\]
up to an inessential multiplicative constant. By Eq.~\eqref{V}, the potential $V(x)$ is given by
\begin{eqnarray*}
  V(x)=\frac{\al^2}2\,\cosh 4x&-2\al(\be+\ga+2N-1)\cosh
                                             2x\\
  &-\be(\be-1)\sech^2x+\ga(\ga-1)\csch^2x+V_0,
\end{eqnarray*}
with
\[
  V_0=-h_*-\frac{\al^2}2+2\al(\be-\ga)+(\be+\ga-1)(\be+\ga+2N-1)+N^2\,.
\]
From Eqs.~\eqref{PnQES}-\eqref{algeig} it follows after a straightforward calculation that the
(unnormalized) algebraic eigenfunctions can be expressed in terms of the monic polynomials
$P_n(E)$ by the formula
\[
  \fl
  \psi_k(x)=\e^{-\frac\al2\cosh 2x}(\cosh x)^\be(\sinh
  x)^\ga\,\sum_{n=0}^{N-1}\frac{P_n(E_k)}{(\ga+\frac12)_n}\,\frac{\big(-\frac14\sinh^2x\big)^n}{n!}\,,
  \qquad 0\le k\le N-1\,,
\]
where $(a)_n:=a(a+1)\cdots(a+n-1)$ denotes the shifted factorial. Thus the square integrability at
infinity of the algebraic eigenfunctions requires that $\al\ge0$. On the other hand, the last
nonconstant term in the potential is singular at the origin unless $\ga=0$ or $\ga=1$. For these
values of $\ga$ the algebraic eigenfunctions are respectively even or odd functions of the
variable $x$. Moreover, when $\ga\ne0,1$ the square integrability at the origin of the algebraic
eigenfunctions is guaranteed provided that $\ga>-1/2$, but the stronger condition $\ga>1/2$ is
required so ensure that the Hamiltonian is essentially selfadjoint. For $1/2<\ga<1$ the potential
is unbounded below near $0$, while for $\ga>1$ it tends to $+\infty$ as $x^{-2}$ when $x\to0$. The
impenetrable nature of the potential barrier near $0$ in the latter case implies that the particle
is effectively confined either to the positive half-line $(0,\infty)$ or to its negative
$(-\infty,0)$.

Using Eq.~\eqref{QESchain} we obtain the following formula for the parameters
of the associated XX chain or free fermion system:
\begin{eqnarray*}
  J_n&=4\sqrt{\al(n+1)(N-n-1)(2\ga+2n+1)}\,,\\
  B_n&=-h_*-4\big(n-\tfrac{N-1}2\big)\big(n+\tfrac{N-1}2-2\al+\be+\ga\big)\,.
\end{eqnarray*}
Taking into account that $\al\ge0$, condition~\eqref{condan} (which is tantamount to requiring
that $J_n$ be real and nonvanishing for $0\le n\le N-2$) is in this case $\al>0$ and
$\ga>-\frac12$. Condition~\eqref{hPncond} is automatically satisfied, since
$A_{n+1}=-2(2\ga+2n+1)$. We thus see that the conditions
\begin{equation}\label{alga1}
  \al>0,\qquad \ga=0\en\text{or}\en\ga>\frac12\,,
\end{equation}
guarantee both the regularity of the algebraic eigenfunctions and the existence of the associated
inhomogeneous XX chain or free fermion system.

\bigskip
\noindent 2.\en $P(z)=4z(1-z)$
\medskip

\noindent
This is the trigonometric version of the previous case. More precisely, setting
\[
  Q(z)=-8\al z^2 + 4(2\al-\be-\ga-N+ 1) z+2(2\ga+N -1)
\]
the change of variables and pseudo-gauge factor are given by
\[
  z=\sin^2x\,,\qquad\mu(x)=\e^{-\frac\al2\cos2x}(\cos x)^\be(\sin x)^\ga\,.
\]
The potential $V(x)$ in this case reads
\begin{eqnarray*}
  V(x)=-\frac{\al^2}2\,\cos 4x&+2\al(\be+\ga+2N-1)\cos
                                2x\\
  &+\be(\be-1)\sec^2x+\ga(\ga-1)\csc^2x+V_0,
\end{eqnarray*}
with
\[
  V_0=-h_*+\frac{\al^2}2-2\al(\be-\ga)-(\be+\ga-1)(\be+\ga+2N-1)-N^2\,.
\]
The corresponding algebraic eigenfunctions are obtained from Eqs.~\eqref{PnQES}-\eqref{algeig},
with the result
\[
  \psi_k(x)=\e^{-\frac\al2\cos2x}(\cos x)^\be(\sin
  x)^\ga\sum_{n=0}^{N-1}\frac{P_n(E_k)}{(\ga+\frac12)_n}
  \frac{\big(-\frac14\sin^2x\big)^n}{n!}\,.
\]
Although this potential is obtained from the previous one applying the Wick rotation
$V(x)\mapsto-V(\iu x)$, the physical natures of these potentials are quite different. Indeed, if
either $\be$ or $\ga$ are $0$ or $1$, $V(x)$ is a nonsingular $\pi$-periodic potential and the
algebraic eigenfunctions are either periodic (if $\be+\ga$ is even) or antiperiodic (if $\be+\ga$
is odd) functions, and thus belong to the edges of the band spectrum. The functions $\sec^2x$ and
$\csc^2x$ behave near their respective singularities at $x_k=(2k+1)\pi/2$ and $x_k=k\pi$ (with
$k\in\ZZ$) as $(x-x_k)^{-2}$. For $\be\notin\{0,1\}$ the square integrability of the algebraic
eigenfunctions near the singularities of $\sec^2x$ will be guaranteed provided that $\be>-1/2$,
and similarly $\ga>-1/2$ when $\ga\notin\{0,1\}$. As in the previous case, however, to ensure that
$H$ is essentially self-adjoint we need the stronger conditions $\be>1/2$ or $\ga>1/2$,
respectively. Moreover, if $\be,\ga>1$ the potential confines the particle inside the interval
$(0,\pi/2)$ (modulo $\pi/2$), and has a purely discrete spectrum.

From Eq.~\eqref{QESchain} we readily obtain the coefficients of the XX chain
or free fermion system in this case:
\begin{eqnarray*}
  J_n&=4\sqrt{\al(n+1)(N-n-1)(2\ga+2n+1)}\,,\\
  B_n&=-h_*+4\big(n-\tfrac{N-1}2\big)\big(n+\tfrac{N-1}2-2\al+\be+\ga\big)\,.
\end{eqnarray*}
Note that $J_n$ is the same as in the previous case, while $B_n+h_*$ differs from its counterpart
for Case 1 one only in its sign. Finally, taking into account the restrictions on the parameters
$\be$ and $\ga$ coming from the regularity of the algebraic eigenfunctions, the conditions
ensuring that $J_n$ is real and nonzero are in this case given by
\[
  \al>0\,,\quad\bigg(\be=0\en\text{or}\en\be>\frac12\bigg)\,,\quad
  \bigg(\ga=0\en\text{or}\en\ga>\frac12\bigg)\,.
\]
As in the previous case, these conditions also guarantee that $A_{n+1}=-2(2\ga+2n+1)\ne0$ for all
$n\ge0$.

\bigskip
\noindent 3.\en $P(z)=z^2$
\medskip

\noindent Parametrizing $Q(z)$ as
\[
  Q(z)=-2\al z^2+\be z+2\ga
\]
we have
\[
  z=\e^x\,,\qquad \mu=\exp\Bigl(-\al\e^x-\ga\e^{-x}+\tfrac12(\be-N+1)x\Bigr)
\]
and
\[
  V(x)=\al^2\e^{2x}+\ga^2\e^{-2x}-\al(\be+N)\e^x+\ga(\be-N)\e^{-x}+V_0\,,
\]
with
\[
  V_0=-h_*-2\al\ga+\frac14\,\be^2\,.
\]
Note that in this case the square integrability of the eigenfunctions, given by
\begin{equation}\label{alg2}
  \fl
  \psi_k(x)=\exp\Bigl(-\al\e^x-\ga\e^{-x}+\tfrac12(\be-N+1)x\Bigr)
  \sum_{n=0}^{N-1}\frac{P_n(E_k)}{n!}\,\bigg(-\frac{\e^x}{2\ga}\bigg)^n,
  \en0\le k\le N-1,
\end{equation}
requires that $\al,\ga\ge0$. The coefficients of the inhomogeneous XX chain associated with this
model are found to be
\begin{equation}\label{Jnan2}
  \fl
  J_n=2\sqrt{\al\ga(n+1)(N-n-1)}\,,\quad B_n=-h_*-\frac14\,(2n-N+1)(2\be+2n-N+1);
\end{equation}
note, in particular, that $J_n$ will be real and nonvanishing provided that
\[
  \al>0\,,\qquad\ga>0\,,
\]
and hence $A_{n+1}=-2\ga\ne0$. The latter conditions also guarantee the square integrability of
the algebraic eigenfunctions. Note also that the hopping amplitude~\eqref{Jnan2} coincides with
that of the chain derived in~\cite{CNV19} from the classical Krawtchouk polynomials
$K_n(x;p,N-1):={}_2F_1\Bigl( {-n,-x\atop -N+1};1/p\Bigr)$ \cite{Kr29}, provided that
$p=(1\pm\sqrt{1-16\al^2\ga^2}\,)/2$. It can be shown, however, that the coefficients $B_n$ in both
chains differ regardless of the value of the remaining parameter $\be$. Thus the chain with
coefficients~\eqref{Jnan2} appears to be new.

\bigskip
\noindent 4.\en $P(z)=4z$
\medskip

\noindent
Although the coefficient multiplying $z$ in this case can be made equal to one by a suitable
dilation, we have taken without loss of generality $\nu=4$ for later convenience. Writing
\[
  Q(z)=-4\al z^2+4\be z+2(2\ga+N-1)
\]
we obtain the following formulas for the change of variable and the pseudo-gauge factor:
\[
  z=x^2\,,\qquad \mu=x^\ga\e^{-\frac\al4x^4+\frac\be2x^2}\,.
\]
Thus $\al\ge0$ is necessary to ensure square integrability at infinity of the algebraic
eigenfunctions. The potential in this case is given by
\[
  V(x)=\al^2 x^6-2\al\be x^4+(\be^2-2\al\ga+\al-4\al N)x^2+\frac{\ga(\ga-1)}{x^2}+V_0\,,
\]
with
\[
  V_0=-h_*+\be(2\ga+2N-1)\,.
\]
As in Case 1, if $\ga\ne0,1$ the self-adjointness of $H$ requires that $\ga>1/2$, while for
$\ga>1$ the potential effectively confines the particle either to the positive or the negative
half-line. The algebraic eigenfunctions are now given by
\[
  \psi_k(x)=x^\ga\e^{-\frac\al4x^4+\frac\be2x^2}\sum_{n=0}^{N-1}\frac{P_n(E_k)}{n!(\ga+\frac12)_n}\,
  \bigg(-\frac{x^2}4\bigg)^n\,,\qquad 0\le k\le N-1\,.
\]
Note that when $\ga=0$ the algebraic eigenfunctions are all even, whereas for $\ga=1$ they are
odd.

From Eq.~\eqref{QESchain} it follows that in this case the parameters of the
associated inhomogeneous XX chain or free fermion system are given by
\[
  \fl
  J_n=4\sqrt{\al(n+1)(N-n-1)(\ga+n+\tfrac12)}\,,\qquad
  B_n=-h_*-2\be(2n-N+1)\,.
\]
Since $\al\ge0$, we see that in this case condition~\eqref{condan} holds ---i.e., $J_n$ is real
and nonzero--- provided that $\al>0$ and $\ga>-\frac12$. Hence, as in Case 1, the
conditions~\eqref{alga1} guarantee both the regularity of the algebraic eigenfunctions and the
existence of the associated inhomogeneous XX chain or free fermion system. Finally,
$A_{n+1}=-2(2\ga+2n+1)\ne0$ for all $n\ge0$ on account of~\eqref{alga1}.

\bigskip
\noindent 5.\en $P(z)=4z(1+z)(a+z)$, with $0<a<1$.
\medskip

\noindent
Writing
\[
  \fl
  Q(z)=-2(2\al+N-1)z^2-4\big[\al+k^2(N-\be)-k'^2\ga\big]\,z+2k'^2(2\ga+N-1),
\]
the change of variables and pseudo-gauge factor can be taken as
\[
  z=\frac{\cn^2x}{\sn^2x}\,,\qquad \mu=(\cn x)^\ga(\sn x)^{\al+2N-2}(\dn x)^{\la}\,,
\]
where
\begin{equation}\label{lambda}
  \la:=-\al+\be-\ga-2N+1
\end{equation}
and $\cn x\equiv\cn(x;k)$, $\sn x\equiv\sn(x;k)$, and $\dn x\equiv\dn(x;k)$ are the standard
Jacobi elliptic functions with (square) modulus $k^2:=1-a\in(0,1)$. A long but straightforward
calculation yields the following formula for the potential $V(x)$ of the corresponding QES model:
\begin{equation}\label{Vell}
  \fl
  V(x)=\frac{\al(\al-1)}{\sn^2x}+k'^2\,\frac{\ga(\ga-1)}{\cn^2x}
  -k'^2\,\frac{\la(\la-1)}{\dn^2x}+k^2\be(\be-1)\sn^2x+V_0\,,
\end{equation}
with
\[
  \fl V_0=-h_*+\be+k'^2\big[\be^2+(2N-2\be+1)\ga\big]+(2-k^2)N(N-2\be) +\al(2N-2\be+1)\,.
\]
The algebraic eigenfunctions are given by
\begin{eqnarray*}
  \psi_k(x)
  &=(\cn x)^\ga(\sn x)^{\al+2N-2}(\dn
    x)^{\la}\sum_{n=0}^{N-1}\frac{P_n(E)}{n!(\ga+\frac12)_n}\,
    \biggl(-\frac{\cn^2x}{4k'^2\sn^2x}
    \biggr)^n\\
  \fl
  &=(\dn
    x)^{\la}\sum_{n=0}^{N-1}\biggl(-\frac{1}{4k'^2}\biggr)^n
    \frac{P_n(E)}{n!(\ga+\frac12)_n}\,
    (\cn x)^{\ga+2n}(\sn x)^{\al+2(N-1-n)}\,.
\end{eqnarray*}
The potential is regular everywhere if and only if $\al$ and $\ga$ are both either $0$ or $1$. In
this case $V$ is $2K$-periodic, where
\[
  K\equiv K(k):=\int_0^{\pi/2}\frac{\diff\th}{1-k^2\sin^2\th}
\]
is the complete elliptic integral of the first kind. The algebraic eigenfunctions are not
square-integrable, but belong to the continuous spectrum (in fact, to the boundaries of the band
spectrum). On the other hand, when $\al\ne0,1$ the potential $V(x)$ diverges at the real zeros
$2mK$ (with $m\in\ZZ$) of $\sn$ as $\al(\al-1)(x-2mK)^{-2}$. In this case the square integrability
of the algebraic eigenfunctions at the singularities of $\sn^{-2}x$ is guaranteed provided that
$\al>-1/2$, while the stronger condition $\al>1/2$ is needed to ensure the self-adjointness of
$H$. Likewise, if $\ga\ne0,1$ then $V$ diverges as $\ga(\ga-1)(x-(2m+1)K)$ (with $m\in\ZZ$) at the
real zeros $(2m+1)K$ of $\cn$. Hence in this case square integrability of the algebraic
eigenfunctions requires that $\ga>-1/2$, while $\ga>1/2$ is needed for $H$ to be essentially
self-adjoint. Finally, if both $\al>1$ and $\ga>1$ then the particle is confined inside the
interval $(0,K)$ (modulo $K$), and the spectrum of $H$ is purely discrete.

The XX spin chain or free fermion system associated to the elliptic QES potential~\eqref{Vell} has
parameters
\begin{eqnarray*}
  \fl
  J_n&=4k'\sqrt{(n+1)(N-n-1)(n+\ga+\tfrac12)(N-n+\al-\tfrac32)},
  \\
  \fl
  B_n&=-h_*+2(2n-N+1)\big[\al-\ga-2n+N-1+k^2\big(\ga-\be+n+\tfrac12\,(N+1)\big)\big]\,.
\end{eqnarray*}
Taking into account the regularity conditions on the eigenfunctions discussed above,
Eq.~\eqref{condan} will hold provided that $\al,\ga>-\frac12$, in which case
$A_{n+1}=-2k'^2(2\ga+2n+1)\ne0$ for all $n\ge0$. Thus the conditions
\begin{equation}\label{alga}
  \bigg(\al=0\en\text{or}\en\al>\frac12\bigg)\,,\quad \bigg(\ga=0\en\text{or}\en\ga>\frac12\bigg)
\end{equation}
guarantee both the regularity of the algebraic eigenfunctions and the existence of the associated
XX chain or free fermion system.

\bigskip
\noindent 6.\en $P(z)=z\big(z^2+2az+1\big)$, with $-1<a<1$. \medskip

\noindent Setting
\[
  Q(z)=-\frac12\,(2\al+N-1)z^2+\big[\be-(1-2k^2)(\al-\ga)\big]z+\ga+\frac12\,(N-1)\,,
\]
the change of variables and pseudo-gauge factor are
\begin{eqnarray*}
  \fl
  z&=\frac{1+\cn x}{1-\cn x}\,,\\
  \fl
  \mu&=(1+\cn x)^{\ga/2}(1-\cn
       x)^{\frac\al2+N-1}(\dn x)^{-\frac12(\al+\ga)-N+1}
       \exp\Bigl(\tfrac{\be}{4kk'}\arctan\big(\tfrac{\cn
       x+\dn^2x}{kk'\sn^2x})\Bigr)\,,
\end{eqnarray*}
where now the square modulus of the elliptic functions is $k^2:=(1-a)/2\in(0,1)$. The
potential~\eqref{V} is given by
\begin{equation}\label{Vell6}
  V(x)=\frac{A+B\cn x}{\sn^2x}+\frac{C+D\cn x}{\dn^2x}+V_0\,,
\end{equation}
with
\begin{eqnarray*}
  \fl
  A&=\frac12\big[\al(\al-1)+\ga(\ga-1)\big]\,,
  &B=\frac12\,(\al-\ga)(\al+\ga-1)\,,\\
  \fl
  C&=\frac{\be^2}{16k^2}-\frac{k'^2}4(\al+\ga+2N-2)(\al+\ga+2N)\,,\qquad
  &D=-\frac\be{4k^2}(\al+\ga+2N-1)\,,\\
  \fl
  V_0&={-h_*-\frac{\be^2}{16k^2}-\frac{k^2}4\,(\al-\ga)^2+\frac14\,\big[\be(\ga-\al)
       +2N(\al+\ga+N-1)\big]\,.}\hidewidth
\end{eqnarray*}
The algebraic eigenfunctions are in this case
\begin{eqnarray*}
  \psi_k(x)&=\mu\sum_{n=0}^{N-1}\frac{P_n(E_k)}{n!(\ga+\frac12)_n}\,\bigg(-\frac{1+\cn x}{1-\cn
  x}\bigg)^n\\
  &=(\dn x)^{-\frac12(\al+\ga)-N+1}
       \exp\Bigl(\tfrac{\be}{4kk'}\arctan\big(\tfrac{\cn
  x+\dn^2x}{kk'\sn^2x})\Bigr)\\&\hphantom{ = (\dn}\times
  \sum_{n=0}^{N-1}(-1)^n\frac{P_n(E_k)}{n!(\ga+\frac12)_n}\,(1+\cn x)^{\frac\ga2+n}
  (1-\cn x)^{\frac\al2+N-n-1}\,.
\end{eqnarray*}
From the identity
\[
  \frac{A+B\cn x}{\sn^2x}=\frac{\al(\al-1)}{2(1-\cn x)}+\frac{\ga(\ga-1)}{2(1-\cn x)}
\]
it follows that if $\al\ne0,1$ (resp.~$\ga\ne0,1$) the potential is singular at the real zeros
$4mK$ of $1-\cn x$ (resp.~the real zeros $2(2m+1)K$ of $1+\cn x$), where $m$ is an integer. Again,
in the first case the regularity conditions are $\al>-1/2$ for square integrability of the
algebraic eigenfunctions and $\al>1/2$ for the Hamiltonian to be essentially selfadjoint, and
similarly for $\ga\ne0,1$. Moreover, if $\al,\ga>1$ the potential confines the particle inside the
finite interval~$(0,2K)$ modulo $2K$.

The associated XX chain or free fermion model coefficients are
\begin{equation}\label{JBn6}
  \eqalign{J_n&=\sqrt{(n+1)(N-n-1)(n+\ga+\tfrac12)(N-n+\al-\tfrac32)}\,,\cr
  B_n&=-h_*+\frac12(2n-N+1)\big[-\be+(1-2k^2)(\al-\ga-2n+N-1)\big]\,.}
\end{equation}
Like in the previous case, Eq.~\eqref{condan} is satisfied provided that both $\al$ and $\ga$ are
greater than $-1/2$, which also implies that $A_{n+1}=-(\ga+n+1/2)\ne0$ for all $n\ge0$. Hence the
conditions guaranteeing the regularity of the algebraic eigenfunctions and the existence of the
associated XX chain or free fermion system are again given by Eq.~\eqref{alga}. Note, finally,
that the hopping amplitude~\eqref{JBn6} coincides with that of the chain constructed
in~\cite{CNV19} from the dual Hahn polynomials
\[
  R_n(x(x+\al+\ga);\ga-1/2,\al-1/2,N-1):={}_3F_2\biggl( {-n,x+\al+\ga,-x\atop
    \ga+1/2,-N+1};1\biggr)
\]
(cf.~\cite{ha49}). However, as in Case 3, the coefficients $B_n$ in both chains differ for all
values of the parameters $\al$, $\be$, $\ga$ and $k$. Hence the chain with
coefficients~\eqref{JBn6} appears to be new.

\subsection{The Lamé chains}

The Lamé (finite gap) potential is defined by
\begin{equation}\label{Lame}
  V(x)=k^2l(l+1)\sn^2x\,,
\end{equation}
where $k\in(0,1)$ is the modulus of the elliptic sine and $l\ge-1/2$ is a real
parameter~\cite{Ar64}. This potential has important applications in many areas of mathematics and
physics, such as potential theory (indeed, it arises by separation of variables in Laplace's
equation in ellipsoidal coordinates), the theory of crystals~\cite{AGI83}, field
theory~\cite{BB93} and inflationary cosmology~\cite{GKLS97,FGLR00}. Since the
potential~\eqref{Lame} is smooth and $2K(k)$-periodic, the corresponding
Hamiltonian~\eqref{QESmodel} has a purely continuous (band) energy spectrum. It is well known that
when $l$ is a nonnegative integer the spectrum has exactly $l$ gaps, and the $2l+1$ eigenfunctions
belonging to the boundaries of the allowed energy bands are homogeneous polynomials in the
Jacobian elliptic functions $\sn$, $\cn$ and $\dn$ (the so called Lamé polynomials). On the other
hand, when $l$ is a positive half-integer the Lamé potential admits (for characteristic values of
the energy $E$) two linearly independent non-meromorphic eigenfunctions with period $8K(k)$
expressible in closed form in terms of $\sn$, $\cn$ and $\dn$~\cite{FGR00}.

The Lamé potential can be obtained as a particular case of the elliptic QES models in Cases 5 and
6 in the previous section. Indeed, consider to begin with Case 5. It is clear that the
potential~\eqref{Vell} in this case reduces to the Lamé potential~\eqref{Lame} (up to a constant,
which can be taken equal to zero by choosing $h_*$ appropriately) provided that the parameters
$\al$, $\ga$ and $\la$ take independently the values $0$ or $1$. Taking into account the
definition~\eqref{lambda} of~$\la$, this means that
\[
 \al=\vep_1,\qquad \ga=\vep_2\,,\qquad\la=\vep_3\,,\qquad\be=2N-1+\vep_1+\vep_2+\vep_3\,,
\]
with $\vep_i\in\{0,1\}$ independently. The parameter $l$ is then given by
\[
  l=\be-1=2(N-1)+\vep_1+\vep_2+\vep_3\,,
\]
the alternative solution $l=-\be$ being unacceptable on account of the condition $l\ge-1/2$. Thus
in this case the parameter $l$ is an integer. The corresponding XX chain has parameters
\begin{equation}
  \label{JBell5}
  \fl
  \eqalign{J_n&=4k'\sqrt{(n+1)(N-n-1)(n+\tfrac12+\vep_2)(N-n-\tfrac32+\vep_1)}\,,\cr
    B_n&=-h_*-2(2n-N+1)\big[2n-N+1-\vep_1+\vep_2+k^2\big(n-\tfrac32(N-1)-\vep_1-\vep_3\big)\big].}
  \end{equation}
Note that $J_n$ is symmetric under $n+1\mapsto N-n-1$ provided that $\vep_1=\vep_2$.

Consider next the potential~\eqref{Vell6} in Case 6. In fact, since $V(x)$ is defined up to a translation
in $x$, it is equivalent but more convenient for our purposes to consider the potential $V(x+K)$.
From the well-known identities
\begin{eqnarray*}
  \cn(x+K)=-k'\,\frac{\sn x}{\dn x}\,,\quad
  \sn^{-2}(x+K)=\frac{\dn^2x}{\cn^2x}\,,\\
  \dn^{-2}(x+K)=\frac{\dn^2x}{k'^2}=\frac1{k'^2}(1-k^2\sn^2x)\,,
\end{eqnarray*}
it follows that $V(x+K)$ reduces to the Lamé potential~\eqref{Lame} (up to a constant) provided
that $A=B=D=0$. The general solution of the latter equations is
\[
  \al=\vep_1\,,\qquad \be=0\,,\qquad \ga=\vep_2\,,
\]
where again $\vep_1,\vep_2\in\{0,1\}$ independently. The parameter $l$ is given by
\[
  l=N-1+\frac12(\vep_1+\vep_2)\,,
\]
and is thus an integer (if $\vep_1=\vep_2$) or a half-integer (if $\vep_1=1-\vep_2$). The
coefficients of the corresponding XX chain or free fermion systems are in this case
\begin{equation}\label{JBell6}
    \eqalign{J_n&=\sqrt{(n+1)(N-n-1)(n+\tfrac12+\vep_2)(N-n-\tfrac32+\vep_1)}\,,\cr
    B_n&=-h_*-\frac12(1-2k^2)(2n-N+1)(2n-N+1-\vep_1+\vep_2)\,.}
  \end{equation}
In particular, both the hopping amplitude $J_n$ and the magnetic field $B_n$ are symmetric for
$\vep_1=\vep_2$. Moreover, the parameter $B_n$ clearly vanishes when $k^2=1/2$ if we take $h_*=0$.

\section{Entanglement entropy of a Lamé chain}\label{sec.ent}

The bipartite entanglement entropy of a quantum system consisting of two subsystems $A,B$ in a
(pure or mixed) state with density matrix~$\rho$ is defined as
\[
  S_A:=s[\rho_A]\,,
\]
where $\rho_A:=\tr_B\rho$ is the reduced density matrix of subsystem $A$ and $s$ is any entropy
functional. In fact, when $\rho$ is a pure state (as we shall assume in the sequel) Schmidt's
decomposition theorem~\cite{NC10} implies that~$s[\rho_A]=s[\rho_B]$ , so that $S_A=S_B$. A common
choice of $s$ is the Rényi entropy
\[
  s_\al[\rho_A]=\frac1{1-\al}\log\tr(\rho_A^\al)\,,
\]
where $\al>0$ is a real parameter. Its limit as $\al\to1$ is the von Neumann (or Shannon) entropy
\[
  s_1[\rho_A]:=\lim_{\al\to1}s_\al[\rho_A]=-\tr(\rho_A\log\rho_A)\,.
\]
The exact evaluation of $S_A$ is in general impossible and its numerical computation is also
prohibitive even for relatively small systems, since it entails the determination of the
eigenvalues of the matrix~$\rho_A$. For instance, if $A$ is a set of $L$ consecutive sites of a
chain of spins $1/2$ the size of $\rho_A$ is $2^L$, which grows exponentially with $L$.
Remarkably, however, for systems like the chain~\eqref{Hchain} or the equivalent free fermion
system~\eqref{Hffs}, whose energy eigenstates are Slater determinants, there is a well-known
algorithm for computing $S_A$ based on the diagonalization of an $L\times L$
matrix~\cite{Pe03,VLRK03}. More precisely, suppose that the free fermion system~\eqref{Hffs}
is in the energy eigenstate
\[
  \ket{M}:=\tilde c^\dagger_0\tilde c^\dagger_1\cdots\tilde c^\dagger_{M-1}\ket0
\]
in which the lowest $M$ single-body energies $E_k$ are excited, and let $A$ be the subsystem
consisting of the first $L$ fermions $0,\dots,L-1$. We define the correlation matrix
$C_A=(C_{ij})_{0\le i,j\le L-1}$ by setting
\[
  C_{ij}=\bra M \dc_ic_j\ket M\,.
\]
The bipartite entanglement entropy $S_A$ can then be computed through the formula
\begin{equation}
  \label{SA}
  S_A=\sum_{i=0}^{L-1}s^{(2)}(\nu_i)\,,
\end{equation}
where $\nu_0,\dots,\nu_{L-1}$ are the eigenvalues of $C_A$ and $s^{(2)}(x):=s[\diag(x,1-x)]$ is
the binary entropy associated with $s$. It is indeed easy to see that $C_A$ is Hermitian, and that
both $C_A$ and $1-C_A$ are positive semi-definite, so that $0\le\nu_l\le1$. For instance, for the
Rényi entropy $s_\al$ we have
\[
  s_\al^{(2)}(x)=\frac1{1-\al}\log\bigl(x^\al+(1-x)^\al\bigr)\,,
\]
while for the von Neumann one, which from now on we shall simply denote by $s$,
\[
  s^{(2)}(x)=-x\log x-(1-x)\log(1-x)
\]
(with $0\log0:=0$).

The correlation matrix $C_A$ can be easily expressed in terms of the OPS $\{P_n\}_{n=0}^N$
associated with the system~\eqref{Hffs}. Indeed, it suffices to note that
\[
  \bra M \tc^\dagger_n\tc_m\ket M=\bra M \tc^\dagger_n\tc_n\ket M\de_{nm}=\de_{nm}\chi_\cM(n)\,,
\]
where $\chi_\cM$ is the characteristic function of the set $\cM=\{0,\dots,M-1\}$ (i.e.,
$\chi_\cM(n)=1$ for $0\le n\le M-1$ and $\chi_\cM(n)=0$ for $n\ge M$). Expressing the fermionic
operators $\dc_i$ and $c_j$ in terms of their counterparts $\tc^\dagger_n$, $\tc_m$ using the
inverse of Eq.~\eqref{mommodes}, i.e.,
\[
  c_k=\sum_{n=0}^{N-1}\phi_k(E_n)\tc_n
\]
(where we have taken into account that $\Phi=\{\phi_k(E_n)\}_{0\le k,n\le N-1}$ is orthogonal) we
easily arrive at the formula
\[
  C_{ij}=\sum_{n=0}^{M-1}\phi_i(E_n)\phi_j(E_n)
\]
or, using Eqs.~\eqref{gan}, \eqref{wkexp}, and~\eqref{phiP},
\begin{equation}
  \label{Cij}
  C_{ij}=\sum_{n=0}^{M-1}\frac{w_n}{\ga_n}P_i(E_n)P_j(E_n)=
  \sum_{n=0}^{M-1}\prod_{k=n+1}^{N-1}a_k\cdot\frac{P_i(E_n)P_j(E_n)}{P_{N-1}(E_n)P_N'(E_n)}\,.
\end{equation}

Equation~\eqref{Cij} can be used to efficiently compute the correlation matrix $C_A$, and hence
the entanglement entropy $S_A$ through equation~\eqref{SA} after diagonalizing $C_A$, for all the
spin chains constructed in the previous section. As an example, we shall next use the latter
formula to study the entanglement entropy of one of the Lamé chains~\eqref{JBell5}
and~\eqref{JBell6}. For simplicity, we have chosen the symmetric version of~\eqref{JBell6} with
$\vep_1=\vep_2=0$, for which $l=N-1$, and have also set $h_*=0$, so that
\begin{equation}
  \label{Lamechain}
  \eqalign{J_n&=\sqrt{(n+1)(N-n-1)(N-n-\tfrac32)(n+\tfrac12)}\,,\cr
  B_n&=-\frac12(1-2k^2)(2n-N+1)^2\,.}
\end{equation}
For $k^2=1/2$, this chain has $B_n=0$ for all $n$. This makes it possible to obtain an asymptotic
formula for its Rényi entanglement in the half-filling regime $M=\lfloor N/2\rfloor$ (where
$\lfloor\cdot\rfloor$ denotes the integer part) using its connection with an appropriate conformal
field theory. (Note that for any nontrivial choice of $M\in\{1,\dots,N-2\}$ we can regard $\ket M$
as the system's ground state by choosing $E_{M}<h_*<E_{M+1}$.)

More precisely, consider in general a spin chain of the form~\eqref{Hchain} or its equivalent free
fermion system~\eqref{Hffs} with $B_n=0$ for all $n$, which we shall rewrite in the more symmetric
fashion
\[
  H=\sum_{m=-N/2+1}^{N/2-1}J_{m+N/2-1}\big(d^\dagger_{m}d_{m+1}+d^\dagger_{m+1}d_{m}\big)\,,\qquad
  d_m:=c_{m+N/2-1}\,.
\]
Here $m=-N/2+1,-N/2,\dots,N/2-1$ can be integer or half-integer according to whether $N$ is even
or odd. We shall first derive the continuum limit of the latter model by introducing a site
spacing $a$, setting $x=ma$ and letting $a\to0$ and $N\to\infty$ in such a way that $a(N-1)/2$
tends to a finite limit~$\ell$, equal to the chain's half-length. Following
Refs.~\cite{Kat12,RRS14,RRS15,RDRCS17,TRS18}, we expand the fermionic operators $d_m$ in slow
modes $\psi_{\mathrm L,\mathrm R}(x)$ around the Fermi points $\pm\kF$, where~$\kF$ is the Fermi
momentum at half filling, as
\[
  d_{m}\simeq\sqrt a\Bigl(\e^{\iu\kF x}\psiL(x)+\e^{-\iu\kF x}\psiR(x)\Bigr)\,.
\]
At half filling we have $\kF=\pi/(2a)$, so that
\begin{eqnarray*}
  \frac{d_{m+1}}{\sqrt a}&\simeq\iu\Bigl(\e^{\iu\kF x}\psiL(x+a)-\e^{-\iu\kF x}\psiR(x+a)\Bigr)\\
  &\simeq\iu\Bigl(\e^{\iu\kF x}\psiL(x)-\e^{-\iu\kF x}\psiR(x)\Bigr)
  +\iu a\Bigl(\e^{\iu\kF x}\pd_x\psiL(x)-\e^{-\iu\kF x}\pd_x\psiR(x)\Bigr)\,.
\end{eqnarray*}
If the fields $\psi_{\mathrm L,\mathrm R}(x)$ are slowly varying, cross terms in
$d^\dagger_{m}d_{m+1}+d^\dagger_{m+1}d_m$ like $\iu a\e^{2\iu\kF x}\psiR^\dagger(x)\psiL(x)$
vanish when summed over $m$. Calling
\[
  J_{m+N/2-1}=J_{(x+\ell)/a-\frac12}:=J(x;a)\,,
\]
and taking into account that $x=ma\in[-\ell+a/2,\ell-a/2$], we are left with
\begin{equation}\label{HJxa}
  H\simeq\iu a\int_{-\ell}^\ell J(x;a)\big[\psiL^\dagger(x)
  \overset{\leftrightarrow}{\pd_x}\psiL(x)
  -\psiR^\dagger(x)\overset{\leftrightarrow}{\pd_x}\psiR(x)\big]\diff x\,.
\end{equation}
For the Lamé chain~\eqref{Lamechain} with $k^2=1/2$ we have
\begin{eqnarray}
  \fl
  J(x;a)&=\frac{\ell^2}{a^2}\sqrt{\bigg(1-\frac{x^2}{\ell^2}\bigg)\bigg(\bigg(1+\frac{a}{2\ell}\bigg)^2
    -\frac{x^2}{\ell^2}\bigg)}\simeq\frac{\ell^2}{a^2}\sqrt{\bigg(1-\frac{x^2}{\ell^2}\bigg)\bigg(1
          -\frac{\ka^2x^2}{\ell^2}\bigg)}\nonumber\\
  \fl
        &=:\frac{\ell^2}{a^2}\,J(x)\,,
          \label{Jdef}
\end{eqnarray}
with
\begin{equation}\label{kappa}
  \ka:=\bigg(1+\frac a{2\ell}\bigg)^{-1}=\frac{2\ell}{2\ell+a}=1-\frac1N
\end{equation}
(cf.~Fig.~\ref{fig.Jnell} left).
\begin{figure}
  \includegraphics[width=.49\textwidth]{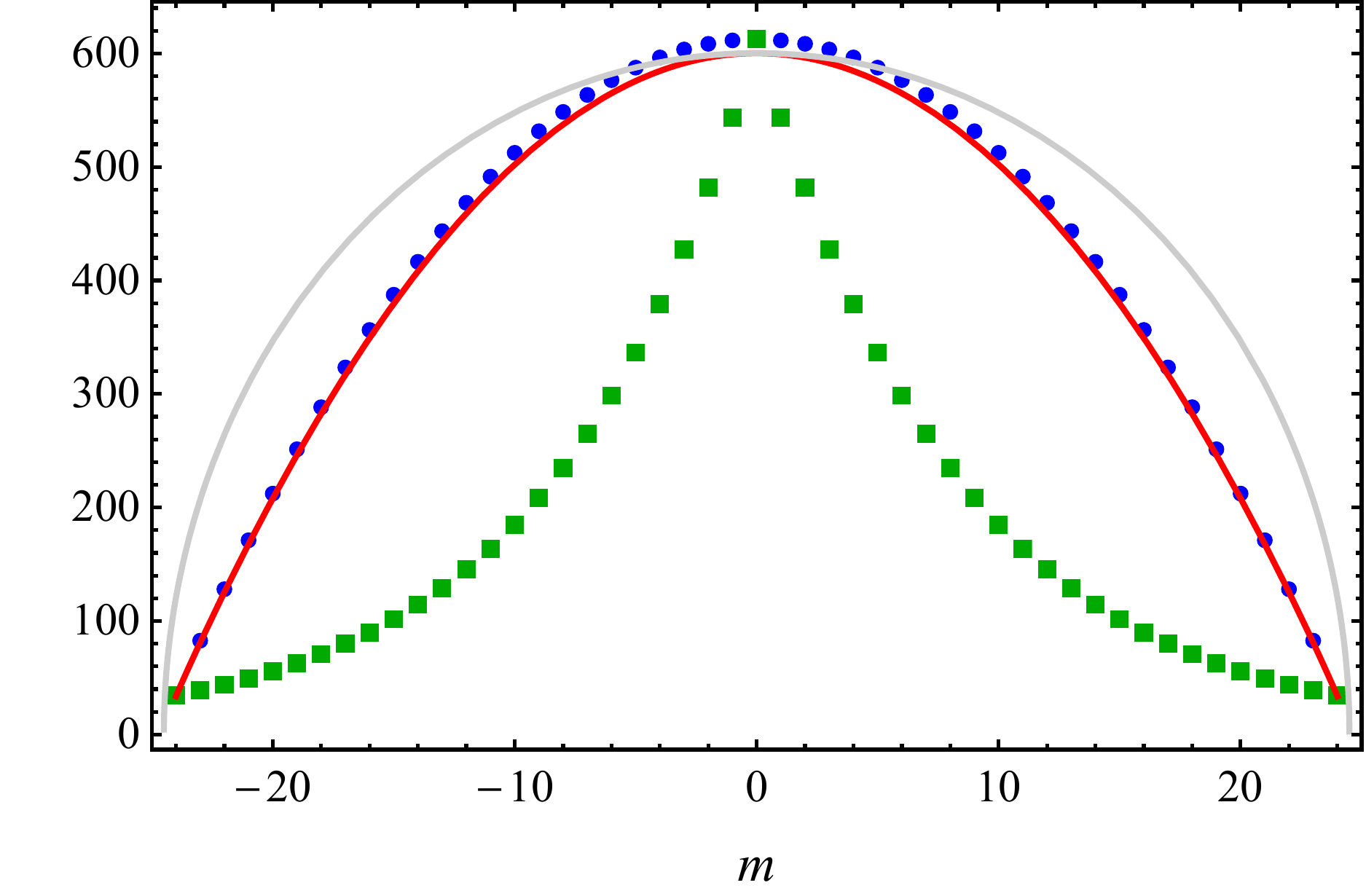}
  \hfill
  \includegraphics[width=.49\textwidth]{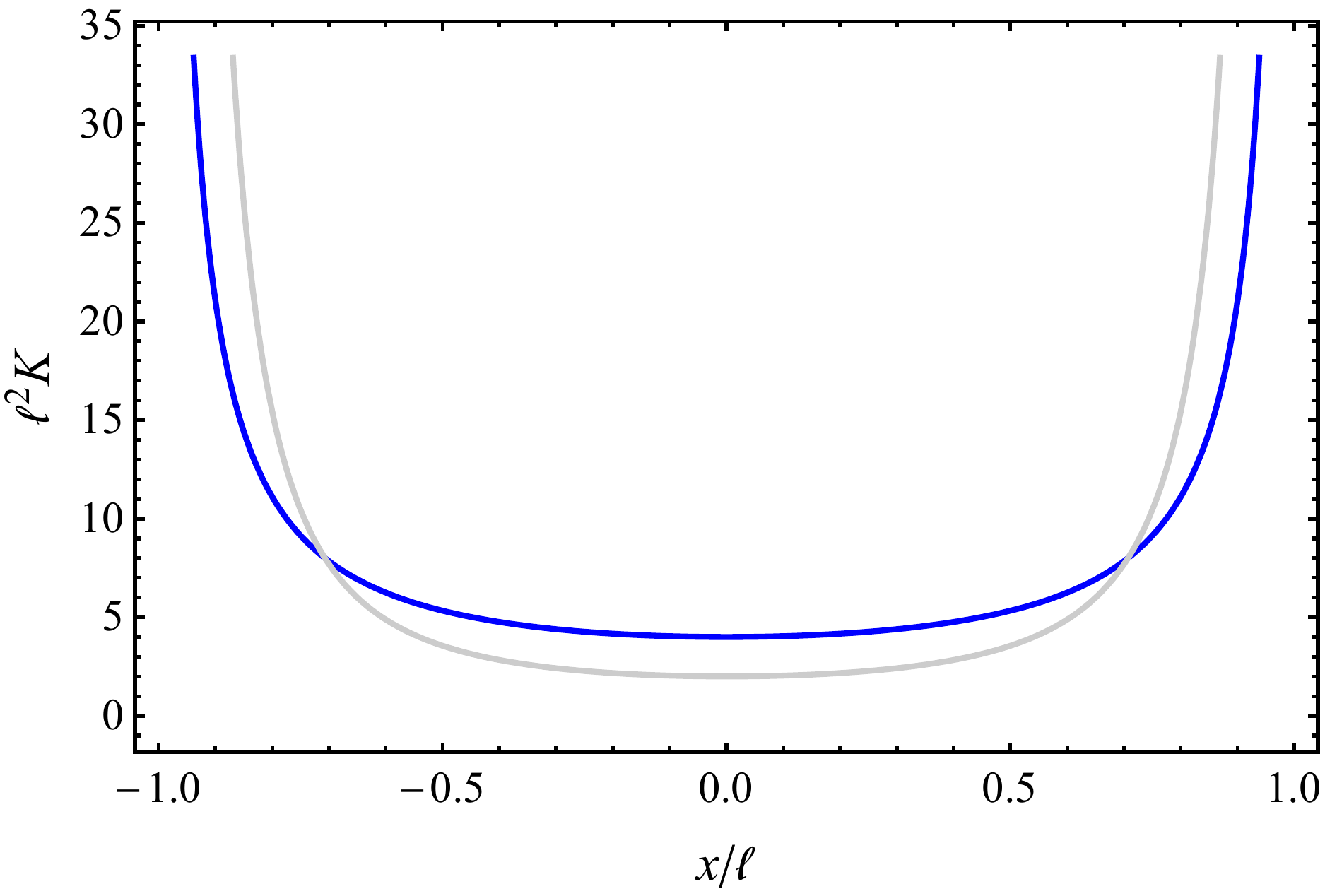}
  \caption{Left: couplings $J_{m+N/2-1}$ of the Lamé chain~\eqref{Lamechain} with $k^2=1/2$ (blue
    dots), compared to those of the rainbow chain coinciding with the former at $m=0,\pm(N/2-1)$
    (green squares), the continuum approximation $\ell^2J(m)$ in Eq.~\eqref{Jdef} for $N=50$ (red
    line), and the (scaled) Fermi velocity $\ell^2v_F(m)=\sqrt{\ell^2-m^2}$ (gray line) of a gas
    of free fermions trapped by the harmonic potential $V(x)=x^2$. Right: scalar curvature of the
    background space associated with the latter Lamé chain in the limit $N\to\infty$ (blue line)
    and with the gas of free fermions in a harmonic potential (gray line).}
\label{fig.Jnell}
\end{figure}
Note that, although~$\ka\to1$ as $N\to\infty$ and therefore $J(x)\simeq1-x^2/\ell^2$, we have not
replaced $\ka$ by $1$ in $J(x)$ since we need $\int_{0}^\ell J(x)^{-1}\diff x$ to be convergent
(see below). Note also that the limiting form of $J(x)$ is reminiscent of the (appropriately
scaled) Fermi velocity $v_F(x)=\sqrt{1-x^2/\ell^2}$ of a gas of free fermions trapped in the
harmonic potential $V(x)=x^2$~\cite{DSVC17}, the main difference being that in the latter case
$\tilde\ell=\ell\pi/2$ is finite.

Substituting Eq.~\eqref{Jdef} into~\eqref{HJxa} we thus have
\begin{equation}
  \label{Hcont}
  H\simeq\frac{\iu\ell^2}a\int_{-\ell}^\ell J(x)\big(\psiL^\dagger(x)\overset{\leftrightarrow}{\pd_x}\psiL(x)
  -\psiR^\dagger(x)\overset{\leftrightarrow}{\pd_x}\psiR(x)\big)\diff x\,,
\end{equation}
with $J(x)$ given by Eq.~\eqref{Jdef}. Using the boundary conditions~\cite{RRS15}
\[
  \psiL(\pm\ell)=\pm\iu\psiR(\pm\ell)
\]
and integrating by parts we obtain the equivalent expression
\begin{eqnarray}
  \label{Hcont2}
  H\simeq\frac{2\iu\ell^2}a\int_{-\ell}^\ell \Big[&J(x)\big(\psiL^\dagger(x)\pd_x\psiL(x)
  -\psiR^\dagger(x)\pd_x\psiR(x)\big)\\
  &+\frac{J'(x)}2\big(\psiL^\dagger(x)\psiL(x)-\psiR^\dagger(x)
  \psiR(x)\big)\big]\diff x\,,
\end{eqnarray}
where $J':=\pd_xJ$. The key observation in Ref.~\cite{RDRCS17} is that the Lagrangian density
associated with the Hamiltonian~\eqref{Hcont2}, namely (up to inessential multiplicative
constants)
\begin{equation}\label{cLJ}
  \fl
  \cL=\psiL^\dagger\pd_t\psiL+\psiR^\dagger\pd_t\psiR-J\big(\psiL^\dagger\pd_x\psiL
  -\psiR^\dagger\pd_x\psiR\big)-\frac{J'}2\big(\psiL^\dagger\psiL-\psiR^\dagger
  \psiR\big)\,,
\end{equation}
coincides with that of a free massless Dirac fermion in a curved background space with an
appropriate metric. To see this in our case, and to compute the background metric, we recall the
expression for the latter Lagrangian density:
\[
  \cL_F=e\,\overline\Psi\slashed D\Psi,\qquad\Psi:=
  \left(
    \psiL\atop\psiR
  \right),
  \quad \overline\Psi:=\Psi^\dagger\ga^0\,.
\]
Here\footnote{We are mostly following the notation of Ref.~\cite{BD82}, which slightly differs
  from that of Ref.~\cite{RDRCS17}.} $e=\det(e_\mu^a)$ is the determinant of the components of the
dual $e^a_\mu$ of the zweibein $E_a^\mu=g^{\mu\nu}\eta_{ab}e^b_\nu$ (with $a,\mu\in\{0,1\}$) and
$\slashed D=E^\mu_a\ga^aD_\mu$, where summation over repeated indices is implied. The $\ga$
matrices are $\ga^0=\iu\si^x$, $\ga^1=\si^y$, and
\[
  D_\mu=\pd_\mu+\frac18\,\om^{ab}_\mu[\ga_a,\ga_b]\,,
\]
(with $\ga_a=\eta_{ab}\ga^b$ and $(\eta_{ab})=\diag(-1,1)$), where $\om^{ab}_\mu$ is the spin
connection. The background metric $g_{\mu\nu}$ is then given by
\[
  g_{\mu\nu}=\eta_{ab}e^a_\mu e^b_\nu\,,
\]
and the spin connection is determined by the metric through the equations
\begin{equation}\label{omab}
  \om^{ab}_\mu=e^a_\nu(\nabla_\mu E^b)^{\nu}=e^a_\nu\big(\pd_\mu
  E^{b\nu}+\Ga^{\nu}_{\la\mu}E^{b\la}\big) =-\om^{ba}_\mu\,,
\end{equation}
where $(\Ga^{\nu}_{\la\mu})$ are the Christoffel symbols of the metric $g_{\mu\nu}$. If we assume
that the zweibein is such that the matrix $(E^\mu_a)$ (and hence $(e_\mu^a)$) is diagonal, the
Lagrangian density~$\cL_F$ reduces to
\[
  \cL_F=-\Psi^\dagger\bigg(e^1_1\pd_t+e^0_0\si^z\pd_x+\frac12e^1_1\om_0^{01}\si^z
  +\frac12e^0_0\om^{01}_1\bigg)\Psi\,.
\]
Comparing with Eq.~\eqref{cLJ} we arrive at the system
\begin{equation}\label{eJconds}
  e^1_1=-1,\quad e_0^0=J\,,\quad \om_0^{01}=-J',\quad \om^{01}_1=0\,.
\end{equation}
The metric is then given by
\[
  g_{00}=-(e_0^0)^2=-J^2\,,\quad g_{11}=(e_1^1)^2=1\,,\quad g_{01}=g_{10}=0\,,
\]
and hence
\begin{equation}\label{ds2}
  \diff s^2=-J^2(x)\diff t^2+\diff x^2\,.
\end{equation}
The non-vanishing Christoffel symbols are
\[
  \Ga^{0}_{01}=\Ga^{0}_{10}=\pd_x\log J\,,\qquad \Ga^1_{00}=\frac12\pd_xJ^2\,,
\]
from which it easily follows that the last two equations in~\eqref{eJconds} are consistent
with~\eqref{omab}. The Ricci tensor of the background manifold is given by
\[
  R_{00}=JJ''\,,\quad R_{11}=-\frac{J''}J\,,\quad R_{01}=R_{10}=0,
\]
and hence the scalar curvature reads
\[
  R=g^{\mu\nu}R_{\mu\nu}=-2\frac{J''}J\,.
\]
Setting~$\xi:=x/\ell$, $\ka'^2:=1-\ka^2$, from Eq.~\eqref{Jdef} it easily follows that
\[
  R=\frac2{\ell^2}\,\frac{(1-\xi^2)^2\big(2-\ka'^2-2(1-2\ka'^2)\xi^2\big)
    +\ka'^4(3-2\xi^2)\xi^4}%
  {(1-\xi^2)^2(1-\ka^2\xi^2)^2}\simeq\frac4{\ell^2-x^2}\,,
\]
so that $R>0$ everywhere and $R\to\infty$ for $x\to\pm\ell$ (cf.~Fig.~\ref{fig.Jnell} right). This
is in sharp contrast with the analogous result for the rainbow chain studied in
Refs.~\cite{RRS14,RRS15,RDRCS17,TRS18}, for which $J(x)=-\e^{-h|x|}$ and consequently
\[
  R=4h\de(x)-2h^2
\]
is negative for $x\ne0$ and singular at the origin. Again, the formula for the scalar curvature of
the model under study resembles that of the free fermion gas trapped by a harmonic potential
studied in Ref.~\cite{DSVC17}, which in appropriate units is given by $R=2\ell^2(\ell^2-x^2)^{-2}$
(cf.~Fig.~\ref{fig.Jnell} right).

To obtain an asymptotic formula for the Rényi entanglement entropy $S_{A,\al}$ of the Lamé
chain~\eqref{Lamechain} with $k^2=1/2$, we pass to the conformally flat form of the
metric~\eqref{ds2}
\[
  \diff s^2=J^2(-\diff t^2+\diff\tilde x^2)\,.
\]
through the change of variable
\[
  \tilde x:=\int_{0}^x J(y)^{-1}\diff y\,.
\]
Using again Eq.~\eqref{Jdef} we then obtain
\[
  \tilde x=\ell\arcsn(x/\ell;\ka)=\ell F(\arcsin(x/\ell);\ka)\,,
\]
where
\[
  F(\th;\ka):=\int_0^{\th}\frac{\diff\vp}{1-\ka^2\sin^2\vp}
\]
is the incomplete elliptic integral of the first kind. Hence
$\tilde x\in[-\tilde\ell,\tilde\ell\,]$, where the conformal length~$\tilde\ell$ is given by
\[
  \tilde\ell=\ell F(\pi/2;\ka)=K(\ka)\ell=K\bigl(1-N^{-1}\bigr)\ell.
\]
Note that as $N\to\infty$
\begin{equation}\label{tildex}
  \tilde x\simeq\ell\arctanh(x/\ell)\,,
\end{equation}
except near $x=\pm\ell$. On the other hand, the conformal length~$\tilde\ell$ diverges
logarithmically as $N\to\infty$; more precisely, we have~\cite{OLBC10}
\[
  K\bigl(1-N^{-1}\bigr)=\frac12\log N+O(1)\,.
\]
This behavior is again quite different from that of the rainbow chain, for which $\tilde\ell$ is
finite (in fact, independent of $N$).

Since the Lagrangian density~$\cL$ associated with the continuum limit of the Lamé chain under
study coincides with that of a massless Dirac fermion in the curved background with metric
$\diff s^2=J^2(-\diff t^2+\diff\tilde x^2)$, the $N\to\infty$ behavior of the bipartite
entanglement entropy $S_A$ of the former model can be analyzed by studying the \emph{Euclidean}
action corresponding to the Lagrangian density~$\cL_F=e\,\overline\Psi\slashed D\Psi$. The latter
action can be written in complex isothermal
coordinates 
as~\cite{DSVC17,RDRCS17}
\[
  \cS=\frac1{2\pi}\int
  J(x)\Big(\psiL^\dagger\overset{\leftrightarrow}{\pd_{z}}\psiL+
  \psiR^\dagger\overset{\leftrightarrow}{\pd_{\bar z}}\psiR\Big)\diff z\wedge\diff\bar z\,,
\]
up to inessential constant factors. According to the result in Ref.~\cite{TRS18}, the Rényi
entanglement entropy $S_{A,\al}$ of this model for a bipartition in which $A=[-l,-x]$ behaves as
\begin{equation}\label{SACFT}
  S_{A,\al}=\frac1{12}\big(1+\al^{-1}\big)\log\biggl(\frac{2\tilde\ell}{\eta\pi} J(x)\cos\biggl(\frac{\pi
    \tilde x}{2\tilde\ell}\biggr)\biggr)\,,
\end{equation}
where $\eta$ is an ultraviolet cutoff independent of $x$ and $\ell$. As explained above, from the
latter formula we can deduce an asymptotic approximation for the Rényi entanglement entropy of the
Lamé chain~\eqref{Lamechain} with $k^2=1/2$ at half filling in the limit $N\to\infty$, for a
bipartition with $A=\{0,\dots,L-1\}$. Indeed, defining $\Lr:=L/N\in(0,1)$ we have
\[
  \frac{\ell-x}{2\ell}=\Lr\en\implies\en\frac{x}{l}=1-2\Lr\,,
\]
and therefore
\[
  J(x)\simeq 1-\frac{x^2}{\ell^2}=4\Lr(1-\Lr)\,,
\]
up to $O(N^{-1})$ terms. Using Eq.~\eqref{tildex} for $\tilde x$ and setting
$\ell=a(N-1)/2\simeq aN/2$ we obtain
\begin{equation}
  \fl
  S_{A,\al}\simeq\frac1{12}\big(1+\al^{-1}\big)\log\biggl\{\frac{4}\pi\Lr(1-\Lr)NK\bigl(1-\tfrac1N\bigr)
  \cos\biggl(\frac{\pi\arctanh(1-2\Lr)}{2K\bigl(1-\tfrac1N\bigr)}\biggr)\biggr\}+\ga_\al,
\end{equation}
where $\ga_a$ is a non-universal constant independent of $x$, $\ell$ and $N$. In fact, except for
values of $\Lr$ very close to $0$ or $1$ we can discard the $\cos$ term in the logarithm, since
as $N\to\infty$ it is of order $(\log N)^{-2}$. Thus a simpler but still sufficiently accurate
asymptotic formula for $S_A$ is
\begin{equation}
  \label{SAasymp}
  S_{A,\al}\simeq\frac1{12}\big(1+\al^{-1}\big)\log\biggl[\frac{4}\pi\Lr(1-\Lr)NK
  \bigl(1-\tfrac1N\bigr)\biggr]+\ga_\al,
\end{equation}
or equivalently
\begin{equation}\label{SAas2}
  S_{A,\al}\simeq\frac1{12}\big(1+\al^{-1}\big)\log\biggl[\frac{2}\pi\Lr(1-\Lr)N\log
  N\biggr]+\tilde\ga_\al,
\end{equation}
where $\tilde\ga_\al$ is another constant independent of $x$, $\ell$ and $N$.
Thus
\[
  S_{A,\al}\simeq\frac1{12}\big(1+\al^{-1}\big)\log N+O(\log\log N),
\]
where the first term is characteristic of a critical system with central charge $c=1$, in the same
universality class as a free fermion with open boundary conditions. Note, however, that the
divergent part of $S_{A,\al}$,
\[
  S_{A,\al}\simeq\frac1{12}\big(1+\al^{-1}\big)\log(N\log N),
\]
fundamentally differs from the well-known $\log N$ behavior found in the homogeneous XX chain and
most one-dimensional critical models\footnote{An exception is, for instance, the model studied in
  Ref.~\cite{XAS18}. Note, however, that in this model the $\log(\log N)$ correction only arises
  after projecting to a sector with well-defined magnetization.}.

\begin{figure}[t]
  \includegraphics[width=.49\textwidth]{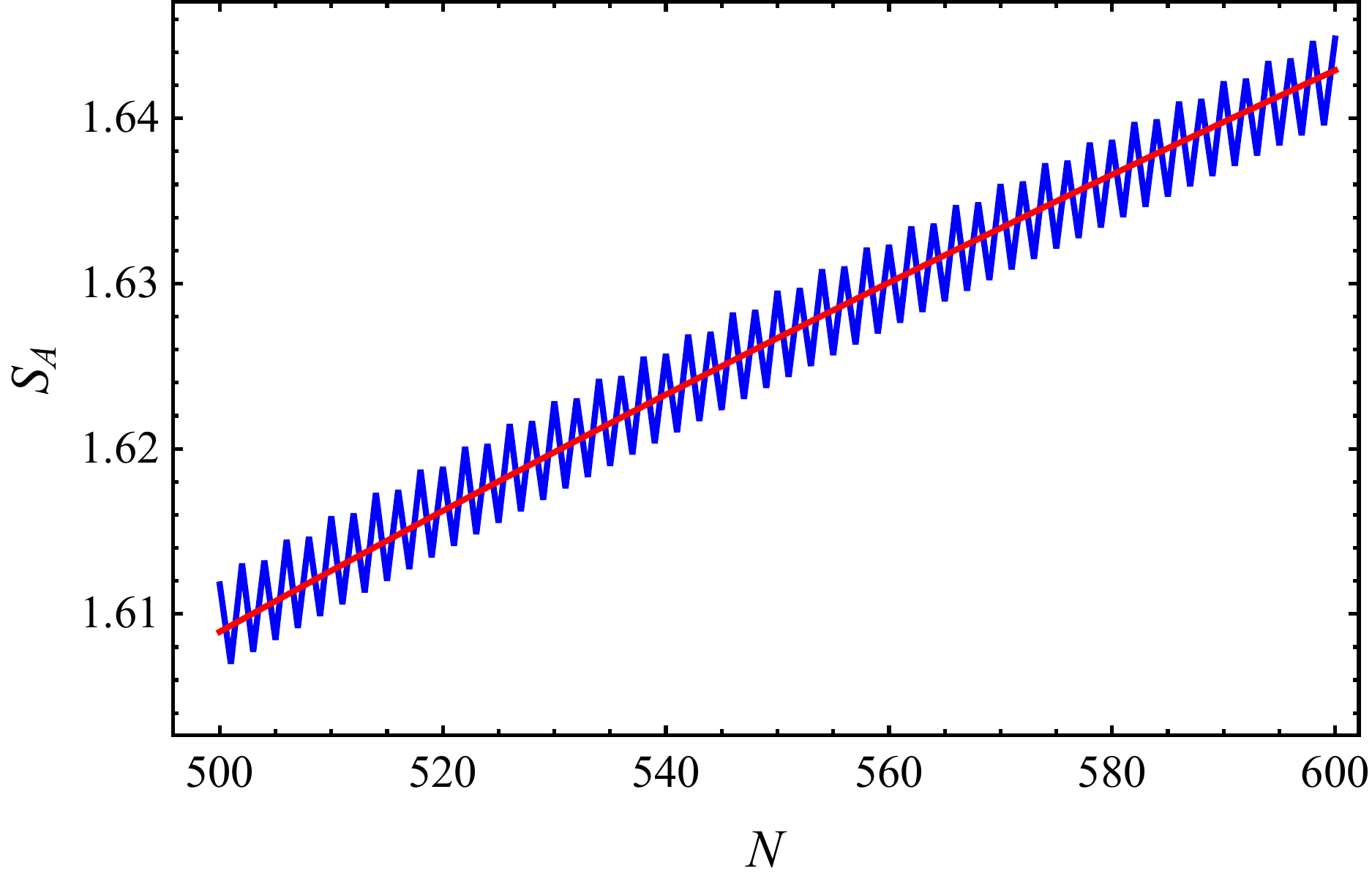}\hfill
  \includegraphics[width=.49\textwidth]{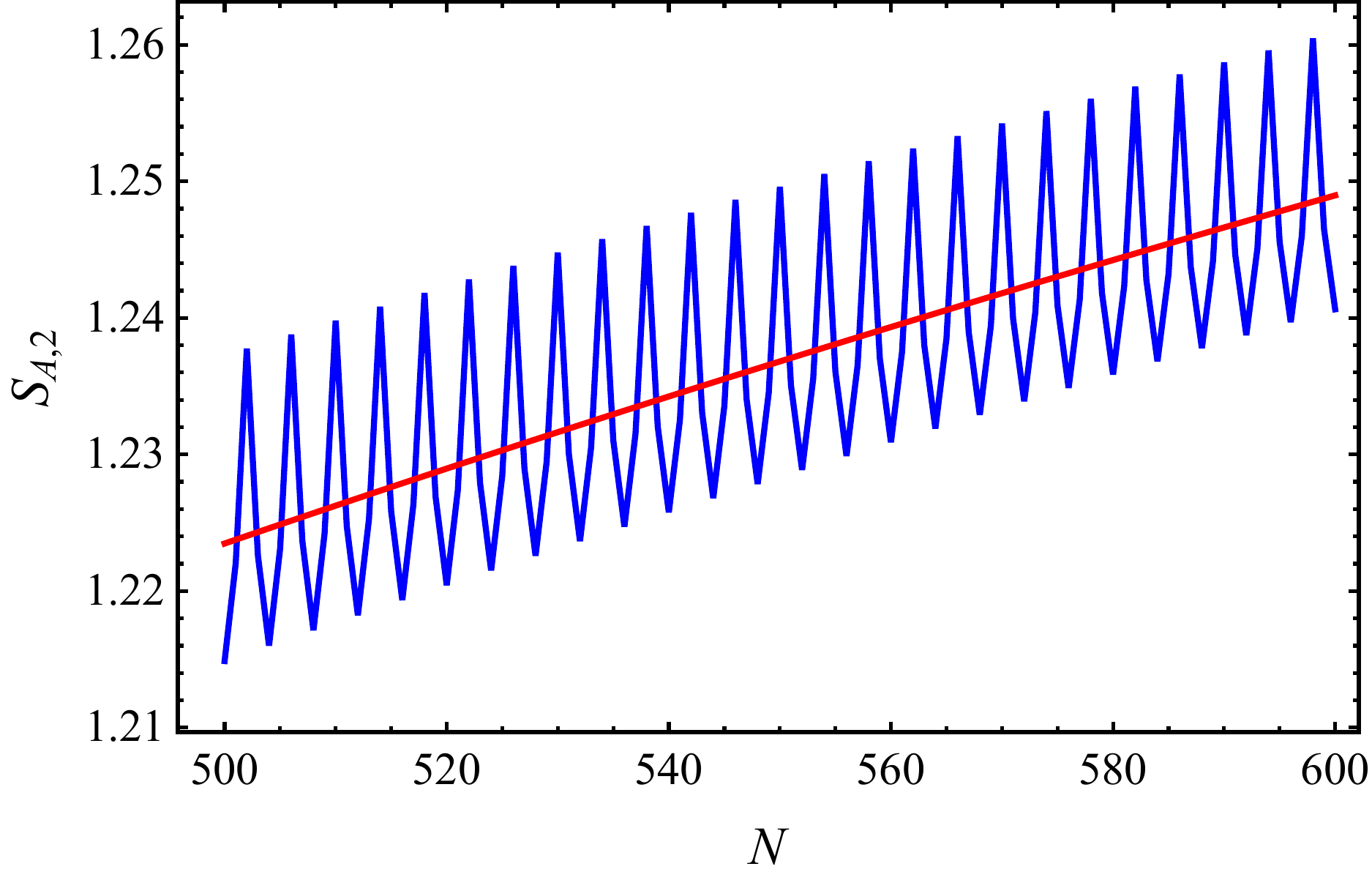}
  \caption{Rényi entanglement entropy of a single block of $\lfloor N/2\rfloor$ successive spins
    starting at the left end of the Lamé chain~\eqref{Lamechain} with $k^2=1/2$ and
    $500\le N\le 600$ spins at half filling ($M=\lfloor N/2\rfloor$), for $\al=1$ (left) and
    $\al=2$ (right), compared to its CFT-based approximation~\eqref{SAasymp} (continuous red
    line).}
  \label{fig.entent}
\end{figure}

Using Eq.~\eqref{SA}, we have numerically computed the Rényi entanglement entropy of a block of
$L$ consecutive spins at the left end of the Lamé chain~\eqref{Lamechain} with $k^2=1/2$ at half
filling for several values of the Rényi parameter $\al$ and the relative block length $\Lr=L/N$,
with $N$ up to $600$ spins. To this end, it is necessary to compute the roots of the critical
polynomial $P_N$ with very high accuracy, since in general the correlation matrix $C_A$ has a
significant number of eigenvalues very close to $0$ or $1$. More precisely, we have found it
necessary to work with $4N$ significant digits in the numerical computation of the roots of $P_N$
and the subsequent numerical diagonalization of the correlation matrix $C_A$. In general, the
agreement of the numerical values of $S_{A,\al}$ thus obtained with the CFT asymptotic
approximation~\eqref{SAasymp} is quite good, particularly for $\al\le1$. For instance, in
Fig.~\ref{fig.entent} we compare $S_{A,\al}$ with $\al=1,2$ to the latter CFT formula for
$\Lr=1/2$ and $N$ ranging from $500$ to $600$, where the non-universal parameter~$\ga_\al$ in
Eq.~\eqref{SAasymp} is estimated through a standard least squares fit of the data. It is apparent
that the fit is excellent in both cases, the coefficient of variation (i.e., the root mean squared
error divided by the mean, in percentage points) being equal to $0.164559$ for $\al=1$ and
$0.670771$ for $\al=2$. Of course, what these comparisons actually test is whether $S_{A,\al}$
behaves as
\[
  S_{A,\al}\simeq\frac1{12}\big(1+\al^{-1}\big)\log\Bigl(N
  K\bigl(1-\tfrac1N\bigr)\Bigr)+\text{const.},
\]
not the specific dependence of the constant in the right-hand side with $\Lr$ in
Eq.~\eqref{SAasymp}. To ascertain the latter dependence, it suffices to note that if
Eq.~\eqref{SAasymp} holds the value of the parameter $\ga_\al$ should not depend on $\Lr$. In view
of this observation, by way of example we have compared the value of $\ga_\al$ in the range
$0.1\le\al\le5$ (at intervals of $0.05$) obtained fitting Eq.~\eqref{SAasymp} to $S_{A,\al}$ with
$500\le N\le 600$ for $\Lr=1/4$ and $\Lr=1/2$. As is apparent from Fig.~\ref{fig.gammafit} (left),
both values differ by less than $10^{-2}$ for $\al\ge0.25$ and by about $3\times 10^{-4}$ for
$\al=5$, in excellent agreement with the $\Lr$ dependence of $S_{A,\al}$ predicted by
Eq.~\eqref{SAasymp}.
\begin{figure}[t]
  \includegraphics[width=.49\textwidth]{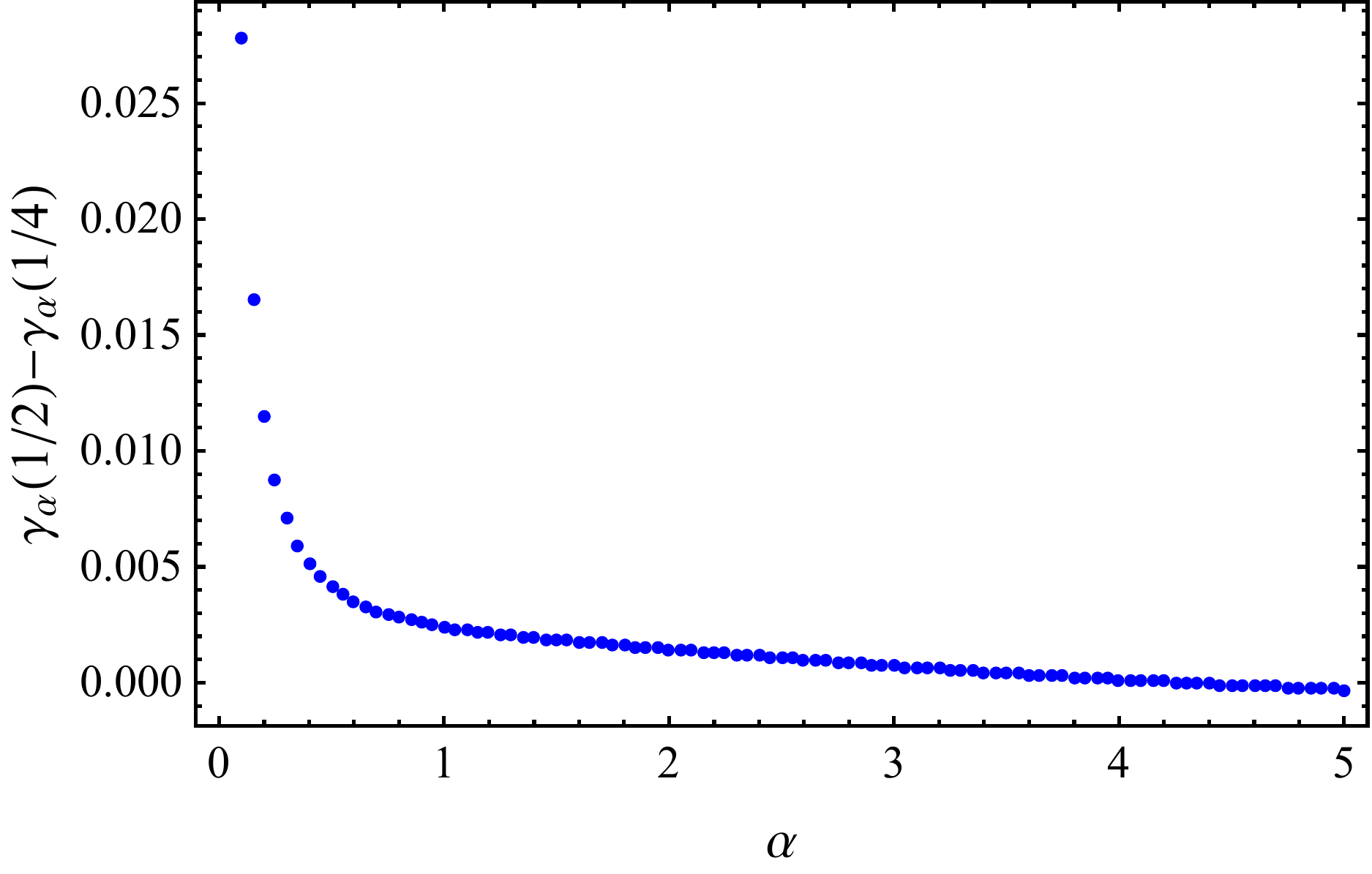}\hfill
  \includegraphics[width=.49\textwidth]{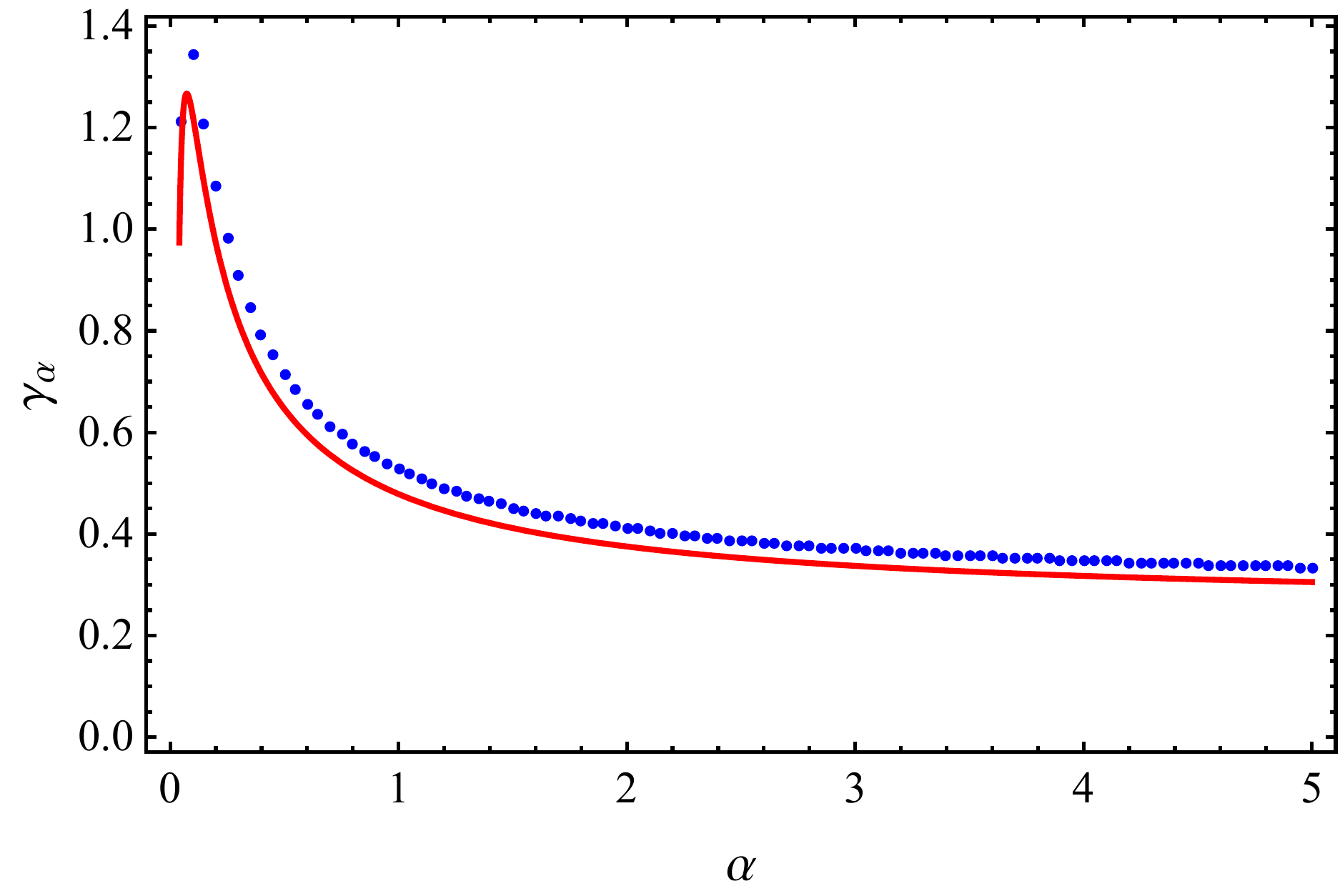}
  \caption{Left: difference between the parameter $\ga_\al$ in Eq.~\eqref{SAasymp} in the range
    $0.1\le\al\le5$ (at intervals of $0.05$) for $\Lr=1/2$ and $\Lr=1/4$. In both cases, $\ga_\al$
    was computed by fitting $S_{A,\al}$ for $500\le N\le600$ to Eq.~\eqref{SAasymp}. Right:
    parameter $\ga_\al$ for the Lamé chain~\eqref{Lamechain} with $k^2=1/2$ (blue dots) compared
    to the analogous quantity~\eqref{gammaun} for the uniform XX chain (red line).}
  \label{fig.gammafit}
\end{figure}%
Note, finally, that the numerical value of $\ga_\al$ for the Lamé chain~\eqref{Lamechain} with
$k^2=1/2$ (for, e.g., $\Lr=1/2$) is remarkably close to the exact value of its counterpart for the
uniform XX chain, namely~\cite{FC11}
\begin{eqnarray}
  \fl
  \ga_\al=\frac12\bigg(1+\frac1\al\bigg)\bigg\{
  &\frac13\log 2\nonumber\\
  \fl
  &+\int_0^\infty\bigg[\frac{\al}{1-\al^2}\,\Big(\al\csch t-\csch(t/\al)\Big)\csch
    t-\frac{\e^{-2t}}6\bigg]\,\frac{\diff t}t\bigg\}
  \label{gammaun}
\end{eqnarray}
(cf.~Fig.~\ref{fig.gammafit}, right).

\section{Conclusions and outlook}\label{sec.conc}

In this work we have established a connection between inhomogeneous XX spin chains (or free
fermion systems) and quasi-exactly solvable models on the line constructed from the $\sla(2)$
algebra. Indeed, any such model generating a family of weakly orthogonal polynomials defines a
corresponding XX chain, whose single-particle Hamiltonian is determined by the coefficients of the
three-term recursion relation of the polynomial family. Moreover, two realizations of the same QES
model equivalent under a projective transformation give rise to isomorphic chains. We have
classified all QES models on the line giving rise to a weakly orthogonal polynomial system under
projective transformations, finding six inequivalent families. Each of them generates a
corresponding family of inhomogeneous XX chains, whose hopping amplitudes and on-site energies are
simple algebraic functions of the chain sites. Although in some cases the hopping amplitudes of
these chains coincide with those of the chains constructed from the classical Krawtchouk and dual
Hahn polynomials in Ref.~\cite{CNV19}, their on-site energies differ. Thus the six types of XX
chains introduced in this paper appear to be new. In particular, from these six new types one can
construct two families of XX chains associated with different QES realizations of the well-known
Lamé (finite gap) potential on the line.

From the polynomial family associated with an inhomogeneous XX chain it is straightforward to
construct the correlation matrix of the corresponding free fermion system, whose eigenvalues yield
its entanglement spectrum. In fact, this is one of the most efficient methods for computing the
bipartite Rényi entanglement entropy of such models. We have used this method to analyze the
entanglement entropy of one of the new Lamé chains, whose on-site energies vanish for a suitable
value of the modulus of the elliptic function. This makes it possible to apply the CFT techniques
in Ref.~\cite{TRS18} to find an asymptotic formula for the entanglement entropy at half filling
when the number of sites $N$ tends to infinity, which reproduces with great accuracy the numerical
results. Interestingly, we show that although the leading behavior of the entropy is the
characteristic one for a critical one-dimensional model with $c=1$, there is a correction
proportional to $\log(\log N)$ which is unusual for this type of systems.

The above results suggest several possible lines for future research. To begin with, the CFT
techniques applied in this work to approximate the entanglement entropy of one of the Lamé chains
can also be used for the new chain associated with the well-known sextic QES potential, whose
coefficients depend on a free parameter. In particular, it could be of interest to ascertain if in
this case there is also a subleading $\log(\log N)$ correction to the leading $\log N$ behavior of
the entanglement entropy. It would also be natural to explore whether the above field-theoretic
techniques can be generalized to chains with non-vanishing on-site energies, and to arbitrary
Fermi momentum. Of course, the bipartite entanglement entropy is only the simplest type of
multipartite entropy one can consider, and in fact the asymptotic behavior of the multi-block
Rényi entanglement entropies of the homogeneous XX model and similar free fermion systems have
been widely studied~(see, e.g., \cite{CH09,ATC10,FC10,CE10,AEF14,CFGT17}). A similar analysis for
the Lamé chain introduced in this paper, or the new chain constructed from the sextic QES
potential, would therefore be worth pursuing. Finally, another natural problem to investigate is
whether any of the chains introduced in this work allows for perfect state transfer of spin
excitations~\cite{Bo07,Ka10,CJ10,Je11}. Indeed, it is known that a necessary condition for this to
happen is that both the hopping amplitude and on-site energy be symmetric about the center of the
chain~\cite{VZ12}. This is actually the case for many of the models introduced in this paper (for
suitable values of the parameters), including the two families of Lamé chains.

\ack This work was partially supported by Spain's Mi\-nis\-te\-rio de Ciencia, Innovaci\'on y
Universidades under grant PGC2018-094898-B-I00, as well as by Universidad Complutense de Madrid
under grant G/6400100/3000. The authors would like to thank L. Vinet for pointing out to us
Ref.~\cite{CNV19} and for inspiring conversations. They are also grateful to G. Sierra for his
helpful comments and for bringing Ref.~\cite{XAS18} to our attention.

\appendix

\section{Classification of QES models admitting a weakly orthogonal polynomial family}

In this appendix we provide the details of the classification in Section~\ref{sec.class} of QES
models on the line giving rise to a weakly orthogonal polynomial system
(cf.~Table~\ref{tab.canP}). As explained in the latter section, these models are characterized by
the fact that the quartic polynomial $P(z)$ in Eq.~\eqref{HgPQR} vanishes at zero and
infinity\footnote{Recall that in this context one says that the polynomial $P(z)$ vanishes at
  $z=\infty$ if $\tP(w):=w^4 P(1/w)$ vanishes at $w=0$, i.e., if $\deg P<4$. The order of
  $z=\infty$ as a root of $P(z)$, defined as the order of $w=0$ as a root of $\tP(w)$, is equal to
  $4-\deg P$.}. Moreover, two such models are equivalent if their polynomials $P(z)$ and $\tP(w)$
are related by a real projective transformation~\eqref{tPQR}. We thus need to find all equivalence
classes of real polynomials $P(z)$ of degree at most four vanishing at zero and infinity (and such
that $P(z)$ is positive in some open interval), modulo real projective
transformations~\eqref{tPQR}. In fact, the classification in Section~\ref{sec.class} follows
easily by considering the root pattern of $P$ in the extended real line, which is invariant under
projective transformations. Let us encode such a pattern by a list of positive integers
$(m_1,m_2,\dots,m_r)$, where $r\ge2$ is the number of distinct real roots of $P$ and $m_i$ is the
multiplicity of the $i$-th root. From the previous remarks it follows that the only allowed root
patterns are
\[
  (2,1,1),\quad (2,2),\quad (3,1),\quad (1,1,1,1),\quad (1,1)\,.
\]
We shall next see that the first root pattern gives rise to the first two canonical forms in
Table~\ref{tab.canP}, while each of the remaining patterns respectively yields the canonical forms
$3$ to $6$. It shall be convenient to deal separately with the cases in which I)~$P$ has at least
one multiple real root, and II)~all real roots of $P$ are simple.

\medskip
\noindent I) $P$ has (at least) one multiple real root

This case corresponds to the first three root patterns above. Applying if necessary a projective
transformation of the form $w=(z-a)^{-1}$, we can assume that $\infty$ is a multiple root of $P$,
or equivalently that $\deg P\le2$. If $\deg P=1$ (corresponding to the root pattern $(3,1)$) then
$P(z)=\nu z$ with $\nu\ne0$, which is in turn mapped to $\tP(w)=w$ (case 4 in
Table~\ref{tab.canP}) by the dilation $z=\nu w$. If $\deg P=2$, apart from the root at the origin
$P$ must have an additional finite real root at $z=-a$, so that $P(z)=cz(z+a)$ with $c\ne0$. The
dilation $z=\la w$ then maps $P(z)$ to
\[
\tP(w)=\frac{c}{\la^2}\,\la w(\la w+a)=c w\bigg(w+\frac a\la\bigg)\,.
\]
If $a=0$ we obtain the third canonical form in Table~\ref{tab.canP} (note that in this case
$c=\nu$ must be positive, because otherwise $P$ would be nonpositive everywhere). If $a\ne0$,
setting $\la=\sgn c\cdot a$ we have $\tP(w)=|c|w(1+\sgn c\cdot w)$, which yields the first
canonical form for $c>0$ and the second one for $c<0$. This exhausts case I, since $\deg P=0$ if
and only if $P=0$.

\medskip
\noindent I) $P$ has no multiple real roots

There are two subcases to consider, depending on whether $P$ has four simple real roots or two
real and two complex conjugate roots (including the root at infinity). In the first case (which
corresponds to the root pattern $(1,1,1,1)$), up to a dilation we can write
\[
  P(z)=c z(z+1)(z+a)\,,
\]
with $c\ne0$ and $a\ne0,1$. To begin with, we can assume that $c>0$, since the linear map
$z=-w-1$ transforms $P(z)$ into $\tP(w)=-c w(w+1)(w+1-a)$. Let us show, finally, that we can
take $a\in(0,1)$. Indeed, if $a>0$ we apply the inversion $z=1/w$, which maps $P(z)$ into
\[
  \tP(w)=c w^3\bigg(\frac1w+1\bigg)\bigg(\frac1w+a\bigg)=c a w(w+1)\bigg(w+\frac1a\bigg).
\]
Since $c a>0$, and $1/a\in(0,1)$ if $a\notin(0,1)$, we see that $\tilde P$ coincides with the
fifth canonical form in this case (with $a$ replaced by $1/a$). Finally, if $a<0$ we perform the
projective transformation $z=-a(w+1)/w$, under which $P(z)$ is mapped to
\[
  \fl \tP(w)=\frac{c}{a^2}\,w^4(-a)\bigg(\frac1w+1\bigg)\bigg(1-a-\frac aw\bigg)\bigg(-\frac
  aw\bigg) =c(1-a)w(w+1)\bigg(w-\frac a{1-a}\bigg).
\]
Again, since $a<0$ we have $c(1-a)>0$ and
\[
  -\frac{a}{1-a}=\frac{|a|}{1+|a|}\in(0,1)\,,
\]
so that $\tP$ adopts the fifth canonical form in Table~\ref{tab.canP}.

Consider, finally, the case in which $P$ has two complex conjugate and two real roots (necessarily
at $0$ and $\infty$), corresponding to the last root pattern $(1,1)$. We can thus write
\[
  P(z)=c z(z^2+2az+b^2)\,,
\]
with $c\ne0$ and $b>|a|$. We can obviously assume that $c>0$ (otherwise, apply the
transformation $z=-w$). The dilation $z=b w$ then maps $P(z)$ into
\[
  \tP(w)=\frac c{b^2}\,b w(b^2w^2+2ab w+b^2)=c b
  w\bigg(w^2+\frac{2a}b\,w+1\bigg)\,,
\]
with $c b>0$ and $|a|/b<1$, which coincides with the sixth canonical form in
Table~\ref{tab.canP}.

\section*{References}


\end{document}